\documentclass[12pt]{article}
\usepackage{a4wide}
\usepackage{amsmath}
\usepackage{amsfonts}
\usepackage{amssymb}
\usepackage{graphicx}
\usepackage{fourier}
\usepackage{dsfont}
\usepackage{hyperref}
\usepackage{enumerate}
\usepackage{esint}
\usepackage{bm}
\usepackage{color}

\setlength{\hoffset}{-8mm}

\newcommand{\version}{March 15, 2016}

\usepackage{amsthm,amsfonts,amsmath, amscd}
\usepackage{amssymb,amsmath, dsfont}
\usepackage{graphicx}

\swapnumbers
                                %
                                %

\pagestyle{myheadings}

                                %
\theoremstyle{plain}

\newtheorem{theorem}{Theorem}[section]

\theoremstyle{definition}
\newtheorem{definition}[theorem]{Definition}

\newtheorem{prop}[theorem]{Proposition}

\theoremstyle{remark}
\newtheorem{remark}[theorem]{Remark}                                
\newcommand{\upchi}{\raise1pt\hbox{$\chi$}}

\newcommand{\F}{{\mathcal{F}}}

\newcommand{\dd}{{\, \rm d}}
\newcommand{\tr}{{\rm Tr}}

\renewcommand{\|}{{\Vert}}

\numberwithin{equation}{section}
\pagestyle{myheadings} \sloppy

\def\dd{{\rm d}}
\def\M{\mathcal{M}}
\def\Q{\mathcal{Q}}
\def\N{\mathbb{N}}

\newcommand{\norm}[1]{\lVert#1\rVert}
\newcommand{\Norm}[1]{\left\lVert#1\right\rVert}
\newcommand{\abs}[1]{\left\lvert#1\right\rvert}
\newcommand{\pa}[1]{\left( #1 \right)}
\newcommand{\br}[1]{\left\lbrace #1\right\rbrace}

\newcommand{\R}{\mathbb{R}}
\newcommand{\kac}{\mathbb{S}^{N-1}\pa{\sqrt{N}}}

\numberwithin{equation}{section}
\def\S{\mathcal{S}}
\def\L{\mathcal{L}}
\def\ncht{\left(\begin{matrix} N\cr 2\cr \end{matrix}\right)}

\begin{document}

\def\tr{{\rm Tr}}

\title{Entropy production inequalities for the Kac Walk}
\author{\vspace{5pt} Eric A. Carlen$^1$, Maria C. Carvalho$^{2}$  and Amit Einav$^{3}$ \\
\vspace{5pt}\small{$1.$ Department of Mathematics, Hill Center,}\\[-6pt]
\small{Rutgers University,
110 Frelinghuysen Road
Piscataway NJ 08854-8019 USA}\\
\vspace{5pt}\small{$2.$ CMAF-CIO, University of Lisbon, P 1749-016 Lisbon, Portugal}\\
\vspace{5pt}\small{$3.$ Departments of Pure Mathematics and Mathematical Statisticsl,} \\[-6pt]
\small{University of Cambridge, Wilberforce Road, Cambridge, CB3 0WB, UK}}

\date{\version}
\maketitle 
\footnotetext                                                                         
[1]{Work partially
supported by U.S. National Science Foundation
grant DMS 1501007   }

\footnotetext
[2]{Work partially
supported by supported by Funda\c c\~ao para a Ci\^encia e Tecnologia
(PTDC/MAT/100983/2008,  PEst-OE/MAT/UI0209/2013, UID/MAT/04561/2013) }

\footnotetext                                                                         
[3]{Work supported by EPSRC grant EP/L002302/1.\\ 
\copyright\, 2017 by the authors. This paper may be
reproduced, in its
entirety, for non-commercial purposes.}

\begin{abstract}
Mark Kac introduced what is now called 'the Kac Walk' with the aim of investigating the spatially 
homogeneous Boltzmann equation by probabilistic means. 
Much recent work, discussed below,  on Kac's program has run in the other direction:  using recent results on the Boltzmann equation, 
or its one-dimensional analog, the non-linear Kac-Boltzmann equation,  to prove results for the Kac Walk.
Here we  investigate new functional inequalities for the Kac Walk pertaining to entropy production, and introduce a new form of 
`chaoticity'. We then show how these entropy production inequalities imply entropy production inequalities for the Kac-Boltzmann equation. 
This results validate Kac's program for proving results on the non-linear Boltzmann equation via analysis of the Kac Walk, and they 
constitute a partial solution to the  `Almost' Cercignani Conjecture on the sphere. 
\end{abstract}

\medskip
\leftline{\footnotesize{\qquad Mathematics subject
classification numbers: 81V99, 82B10, 94A17}}
\leftline{\footnotesize{\qquad Key Words: Kac Walk, Chaoticity,  Entropy Production, Cercignani Conjecture}}

\section{Introduction} \label{intro}  
The Kac Walk is a Markov jump process for an $N$ particle model of a gas interacting through binary collisions 
between molecules. At random times arriving in a Poisson stream, 
pairs of indistinguishable particles, with one clock for each pair, undergo an energy conserving collision in 
which their velocities are rotated at a random angle, again, chosen uniformly. In a more  general version, the Poisson clocks 
governing the collision times for pairs of particles could run at rates that are related to the 
energy of these  pairs of  particles (\cite{CCL14,V}), making 
such a collision model  more physically realistic. In the original model by Kac, \cite{K56}, these rates were uniform.
More precisely, the Kac Walk is a continuous time Markov jump process whose state space is $\kac$, 
the sphere of radius $\sqrt{N}$ in $\R^N$. Let  $\bm{v}= (v_1,\dots,v_N)$ denote a generic element of the state-space. 
The generator $\L_{N,\gamma}$, acting on continuous functions $F$ on $\kac$   is given by 
\begin{equation}\label{masgen}
\L_{N,\gamma}F ( \bm{v}) =  -{N}{\ncht}^{-1}\sum_{i<j} \pa{1+v_i^2+v_j^2}^{\gamma}\
 \frac{1}{2\pi}\int_{-\pi}^\pi \pa{F ( \bm{v}) - F(R_{i,j,\theta} \bm{v})} \dd \theta
\end{equation}
where
\begin{equation*}
(R_{i,j,\theta} \bm{v})_k = \begin{cases}v_i(\theta)=v_i \cos \theta + v_j \sin \theta & k =
 i\\   v_j(\theta)=-v_i \sin \theta+v_j \cos \theta & k = j\\   v_k & k\neq i,j\end{cases}\ ,
\end{equation*}
and $\gamma\in [0,1]$ is the parameter that measures the relation of the Poisson clocks and the energy of the colliding 
pair of particles. Let $d\sigma_N$ denote the uniform probability measure on $\kac$.
It is the unique invariant measure for this process, which is ergodic and reversible. Therefore,
the {\em {Kac Master equation}}, 
\begin{equation}\label{kme}
\frac{\partial}{\partial t} F(\bm{v},t) = \L_{N,\gamma}F(\bm{v},t)\ ,
\end{equation}
is the forward Kolmogorov equation describing the evolution of  the law  $F(\bm{v},t)d\sigma_N$ of the state under the process (assuming that the law of the initial state is absolutely continuous with respect to $d\sigma_N$.) 
Kac devised his model  to give a probabilistic framework  for the description of  ``an average $N$ particle gas''  from which he could deduce,
 in the limit $N\to\infty$,   the evolution of the single particle marginals of solutions of the Kac Master equation (\ref{kme}) with `chaotic' initial data,
by a one dimensional Boltzmann-like equation
\begin{equation}\label{kbeA}
\frac{{\partial}}{\partial t} f(v,t) =  \Q_\gamma f(v,t)
\end{equation}
where the non-linear operator $\Q_\gamma$ is given in (\ref{kbew}) below.  (In \cite{K56}, Kac only consider the case $\gamma =0$, but more recent work has extended his results to other values of $\gamma$ as discussed below.)

Kac proposed that this rigorous connection between the Master equation (\ref{kme}) and the non-linear Kac-Boltzmann 
equation (\ref{kbeA})  could be exploited to prove results on the non-linear evolution equation (\ref{kbeA}) via 
analysis of the Kac Walk. In particular, he was interested in rates of equilibration. Much recent work on the Kac program  
has gone in the opposite direction. While the rigorous mathematical investigation non-linear kinetic equations such as 
(\ref{kbeA}) was in a primitive state in 
1956 when Kac made his proposal, with most of what was known contained in a paper \cite{C32} of Carleman, 
it has advanced considerably since that time.  The recent paper \cite{MM} of Mischler and Mouhot entitled
{\em On Kac's program in kinetic theory} uses the analytic advances in the understanding of the non-linear 
Boltzmann equation to obtain deep results on the behaviour of the Kac Walk. 

In this work, we validate Kac's original vision for his program by giving direct proofs of some new functional 
inequalities for the Kac Walk from which we deduce bounds on the rates of relaxation to equilibrium for solutions of 
(\ref{kbeA}), just as Kac had proposed. The results we obtain for the Kac Walk themselves are new and interesting, 
and can be viewed as a partial positive resolution of the `Almost' Cercignani Conjecture for the Kac Walk, as 
discussed in \cite{V}. We shall return to this later, but before describing our results in more detail, it is necessary to 
explain Kac's notion of {\em chaos}, and how it provides the bridge between the Kac Master equation and the Kac-Boltzmann equation.

\subsection{The Connection with the Boltzmann Equation}

Let $F_N(\bm{v},t)$ be  a solution of the $N$-particle Kac master equation with a permutation 
symmetric initial data $F_N(\bm{v},0)$ (representing the indistinguishability of the particles).  For $N\geq k$, $\Pi_k\pa{F_N}(v_1,\dots,v_k,t)$ denote the 
$k$-particle marginal of  $F_N({\bf{v}},t)$.
That is, for continuous and bounded functions $\varphi$ in $\R^k$,
$$
\int_{\R^k} \varphi(v_1,\dots v_k) \Pi_k\pa{F_N}(v_1,\dots,v_k,t){\rm d}v_1\cdots {\rm d}v_k = 
\int_{\kac} \varphi(v_1,\dots,v_k)F_N({\bf{v}},t){\rm d}\sigma_N\ .
$$
Kac used the  permutation symmetry of the initial data and of the generator $\L_{N,\gamma}$ to show that 
$$\frac{\partial}{\partial t}\Pi_1\pa{F_N}(v_1) = \frac{1}{\pi}\int_{0}^{2\pi}\int_{\R} 
\pa{\Pi_2\pa{F_N}\pa{v_1(\theta),v_2(\theta)}-\Pi_2\pa{F_N}(v_1,v_2)}  dv_2 d\theta\ .$$
Kac noticed that {\em if} 
\begin{equation}\label{cond}
\lim_{N\to\infty} \Pi_1\pa{F_N}(v_1,t) = f(v_1,t)\qquad{\rm and}\qquad 
\lim_{N\to\infty} \Pi_2\pa{F_N}(v_1,v_2,t) =f(v_1,t)f(v_2,t) 
\end{equation}
for all $t\geq 0$, the limiting single particle marginal $f(v,t)$ solves (\ref{kbeA})
with 
\begin{equation}\label{kbew}
\Q_\gamma f(v)  =  \frac{1}{\pi}\int_{-\pi}^{\pi} \pa{1+ (v^2+w^2)}^{\gamma} [f(v \cos\theta  - w\sin\theta )f(v\sin\theta  
+ w\cos\theta ) - f(v)f(w)]{\rm d}\theta {\rm d}w.
\end{equation}

Though Kac only considered the case $\gamma =0$ in \cite{K56}, his arguments up to this point apply  equally well to all $\gamma\in [0,1]$. 
However, it remains to determine whether (\ref{cond}) is valid for solutions of the Kac Master Equation, and here the 
value of $\gamma$ makes a difference in the degree of difficulty of the problem. With (\ref{cond}) as motivation, 
Kac made the following definition:

\begin{definition}[Chaos]\label{simchaosdef} 
Let 
$f$ be a probability density on $\R$ with respect to Lebesgue measure such that  $f$ satisfies 
$\int_{\R} v^2f(v){\rm d}v =1$.  A sequence $\{F_N\}_{N\in \N}$, where $F_N$ is a permutation symmetric 
probability density on $\kac$, is called {\em $f$-chaotic} if for all $k\in N$, and all continuous bounded functions $\varphi$ on $\R^k$,
$$\lim_{N\to\infty} \int_{\S_N} \varphi(v_1,\dots,v_k) F_N(v){\rm d}\sigma_N =
 \int_{\R^k} \varphi(v_1,\dots,v_k) \prod_{j=1}^k f(v_j) {\rm d}v_1\dots {\rm d}v_k\ ,$$
 i.e. the $k-$th marginal $\Pi_k \pa{F_N}$ of $F_N$ converges to $f^{\otimes k}$ in the vague topology.
\end{definition}

Kac showed that in  case if $\{F_N\}_{N\in \N}$ is $f_0$-chaotic, and if $F_N({\bf v},t)$ is the solution of the Kac Master equation
(\ref{kme}) with $\gamma =0$ and initial data $F_N(v)$, then $\{F_N(\cdot t)\}_{N\in \N}$ is $f(\cdot,t)$-chaotic where
$f$ solves (\ref{kbeA}) with $\gamma = 0$.  This was the main result in \cite{K56}, and is the first example of a 
theorem on ``propagation of chaos'' a term coined by Kac in reference to this primal example. 

The open the question of proving propagation of chaos for values of $\gamma\in(0,1]$ was achieved later on by 
Sznitmann (see \cite{Sz1,Sz2}), building on earlier work of Gr\"unbaum (\cite{Gru}). These authors treated a more 
complicated collision model with three dimensional velocities and collisions that conserve energy as well as 
momentum, but the methods apply to the present equations as well, in a simpler manner. 

Much recent work on Kac's program has focused on {\em quantitative} refinements of Kac's Theorem giving rates of convergence in
(\ref{cond}) and in stronger topologies than the vague topology that sufficed for Kac's original purposes. 
This is the main focus of the work of Mischler and Mouhot \cite{MM} mentioned earlier.  
A refined notion of chaos plays a critical role in the present work. 

\subsection{Convergence to Equilibrium and the Entropy Method}

As  mentioned earlier, Kac had hoped to deduce rates of convergence to equilibrium for solutions of  \eqref{kbeA} 
from rates of convergence to equilibrium for solutions of his master equation. To do this, one needs rates estimates for 
the Master equation that are {\em independent of $N$}. 

There are a number of ways to measure rates of convergence to equilibrium for reversible  random walks, and 
one of the simplest is in terms of a spectral gap of the generator. Kac conjectured that the spectral gap of  
$\L_{N,0}$ was bounded away from zero, uniformly in $N$. 
This problem remained open until 2001, when it was solved by Janvresse \cite{J01}).  This result was made quantitative,  
extended to three dimensional collisions, and to $\gamma>0$ in a series of papers (see \cite{CCL00,CCL03,CGL}). 

However, uniformity in $N$ is not all that is needed to pass from rate bounds for the Master equation to rate bounds for the Kac-Boltzmann equation. The $L^2$ metric is ill-suited to this purpose (see \cite{V}), and another measure of the distance to equilibrium is required. 
Indeed, the $L^2$ metric is not particularly natural for the Kac-Boltzmann equation.  What is more natural, especially in the context of Boltzmann well-known $H$-Theorem on the monotonicity of the entropy for solutions of his equation, is the use of {\em relative entropy},
as suggested originally by Cercignani (\cite{CC}) in which he conjectured a strengthened form of the Boltzmann $H$-Theorem that we discuss below. 

Let  $f $ and $g$ are two probability 
densities on a measure space $(X,\mathcal{\F},\mu)$. The {\em relative entropy
of $f {\rm d}\mu$ with respect to $g{\rm d}\mu$} is the quantity $H(f| g) =
 \int_Xf  [\ln f - \ln g]{\rm d}\mu$. 
 Pinsker's inequality \cite{Cis,Kul,Pin} says that 
\begin{equation}\label{pinsker}
H(f|g)  \geq \frac12 \left(\int_X|f-g|{\rm d}\mu\right)^2\,
\end{equation}
Thus, while $H(f|g)$ is not itself a metric, it does control the $L^1$ distance between $f$ and $g$. 

The equilibrium solutions of the Kac-Boltzmann equation (\ref{kbeA}) are the {\em centred Maxwellian densities} 
$M_T(v) = (2\pi T)^{-1/2} e^{-v^2/2T}$. The equilibrium to which the solution with initial data $f_0$ tends is the one with
$T = \int_\R v^2 f_0(v){\rm d}v$ (see \cite{K56}).  Consider a solution $f$ of (\ref{kbeA}) with initial data $f_0$ for which $\int_{\R}v^2 f(v){\rm d}v =1$.
Since the energy is conserved, Boltzmann's $H$ theorem implies that $H(f(\cdot,t)|M_1)$ is monotone decreasing in $t$ (\cite{CC}). 
Cercignani's conjecture for the Kac-Boltzmann equation (he actually considered the analog for $3$-dimensional velocities) was that for some constant $C>0 $, all such solutions with initial data and with $H(f_0|M_1) < \infty$ satisfy
\begin{equation}\label{Cercon}
\frac{{\rm d}}{{\rm d}t} H(f(\cdot,t)|M_1)  \leq -CH(f(\cdot,t)|M_1)\ .
\end{equation}
Pinsker's inequality would then yield $\| f(\cdot,t) - M_1\|_1 \leq [H(f_0|M_1)]^{1/2} e^{-Ct/2}$. 

Cercignani's conjecture is false for all $\gamma < 1$ \cite{BC} -- but it is true for $\gamma =1$, as shown by Villani \cite{V}, who also showed how this result could be used to prove non-exponential bounds on the rate of relaxation for other values of $\gamma$ and suitable constraints on the initial data.  Thus, the best one can hope for is not (\ref{Cercon}), but something such as 
\begin{equation}\label{Cercon1.5}
\frac{{\rm d}}{{\rm d}t} H(f(\cdot,t)|M_1)  \leq -C_\epsilon \left(H(f(\cdot,t)|M_1)\right)^{1+\epsilon}\ 
\end{equation}
for $\epsilon>0$, and this is what Villani shows to be true for suitable classes of initial data, as we discuss below. For future reference, we express (\ref{Cercon1.5}) as a functional inequality. Define ${\displaystyle D(f(\cdot,t)) :=  -\frac{{\rm d}}{{\rm d}t} H(f(\cdot,t)|M_1)}$ where
$f(v,t)$ is the solution of (\ref{kbeA}) with  $f(v,0) = f(v)$.  Then we may restate (\ref{Cercon1.5}) as
\begin{equation}\label{Cercon2}
D(f(\cdot,t))   \geq C_\epsilon \left(H(f(\cdot,t)|M_1)\right)^{1+\epsilon}\ .
\end{equation}

The main question that we address here is the following:  
Do there exist functional inequalities for the Kac Walk from which it is possible to deduce inequalities of the form (\ref{Cercon2})?

To investigate this question, let $F$ be a probability density with respect to ${\rm d}\sigma_N$ on $\kac$.  The relative entropy of $F$ with respect to the uniform density $1$ is simply 
$\int_{\kac} F\ln F {\rm d}\sigma_N$. To simplify our notation, we define 
\begin{equation}\label{relent}
H_N(F) = \int_{\kac} F\ln F {\rm d}\sigma_N \ .
\end{equation}

We are thus led to  investigate the {\em (relative) entropy dissipation} under the dynamics generated by $\L_{N,\gamma}$. This dissipation, sometimes called \emph{the entropy production} is the non-negative quantity $D_{N,\gamma}(F)$ that is given by
 $$
 D_{N,\gamma}(e^{t\L_{N,\gamma}} F)  = - \frac{{\rm d}}{{\rm d}t} H_N(e^{t\L_{N,\gamma}}F)\ .
 $$
A direct computation shows that,
 $$D_{N,\gamma}(F_N)=\frac{1}{4\pi}\frac{N}{\pa{\begin{tabular}{c} $N$ \\ $2$ \end{tabular}}}\sum_{i<j}\int_{\kac}\int_0^{2\pi}\pa{1+(v_i^2+v_j^2)}^\gamma\psi\pa{F_N,F_N\circ R_{i,j,\theta}}d\sigma^Nd\theta \ .$$
where $\psi(x,y)=(x-y)\log \pa{\frac{x}{y}}$.\\
Inequalities relating $H_N(f)$ and $D_{N,\gamma}(F)$ are useful for quantifying the aforementioned rate of convergence. 
As we only connect the Kac Walk with the Kac-Boltzmann equation in the limit $N\to\infty$, we are ultimately only interested in inequalities that are {\rm uniform in $N$}. In particular, one might hope to find $C_\gamma>0$, independent of $N$, such that
$$D_{N,\gamma}(F) \geq C_\gamma H_N\pa{F}.$$
This is known as {\em{Cercignani's Conjecture for the Kac Walk}}.  A significant breakthrough in its study was done in 2003 by Villani (see \cite{V}) where he introduced the family of operators $\br{\L_{N,\gamma}}_{\gamma\in[0,1]}$ and showed that
\begin{equation}\label{epkme}
D_{N,\gamma}(F) \geq C_\gamma \frac{ H_N(F)}{N^{1-\gamma}}.\
\end{equation}
This gives a decay rate of order $e^{-CN^{\gamma-1}t}$ that, besides the case $\gamma=1$, is meaningless in the limit $N\to\infty$. 

\subsection{Main Results}

For the Kac Walk, the intuition that chaotic data with a one particle marginal $f$ behaves like  
$f^{\otimes N}$ is true in some cases, and one can show (see \cite{CCRLV} for precise statements)  that
\begin{equation}\nonumber
 \frac{ H_N(F_N)}{N} \approx H(f|M), \quad  \frac{ D_{N,\gamma}(F_N)}{N} \approx D_\gamma(f).
\end{equation}

This, together with the inequality (\ref{Cercon2}) of Villani suggests that we seek inequalities of the form
$$\frac{D_N\pa{F_N}}{N} \geq C_{\gamma,\epsilon}\pa{\frac{H_N\pa{F_N}}{N}}^{1+\epsilon}.$$
We shall prove two inequalities of this type. 
The first of these holds for a class of initial data that is propagated by the Kac Master equation.  However, in this inequality, the
constant depends weakly on $N$. However weak, this dependence prevents this inequality from being used to prove results for the Kac-Boltzmann equation. The second inequality has a constant that is independent of $N$, but the conditions on $F$ under which it is valid involve a new notion of chaos that we do not know to be propagated by the Kac Master equation. {\em Despite this},  we show that this inequality may be used to bound the rate of relaxation to equilibrium for solutions of the Kac-Boltzmann equation. 
For simplicity, from this point onwards we will mean permutation symmetric when saying that a given density function is symmetric.

\begin{definition}
Let $F_N\in P\pa{ \kac}$. We say that $F_N$ is log-scalable if there exists $C>0$, independent of $N$, such that
\begin{equation}\label{eq:log scalable}
\sup_{\bm{v}\in \kac}\abs{\log F_N\pa{\bm{v}}} \leq C N.
\end{equation}
\end{definition} 
\begin{definition}\label{def:log-power-property}
We say that a family of probability densities $\br{F_N}_{N\in\mathbb{N}}\in P\pa{\kac}$ have the log-power property of order $\beta>0$, if there exists $C>0$, independent of $N$, such that
\begin{equation}\label{eq:log-chaoticity}
\frac{1}{2\pi}\int_{0}^{2\pi}\int_{\kac} \psi_\beta\pa{F_N-F_N\circ R_{1,2,\theta}}d\sigma^N d\theta \leq C^{1+\beta}
\end{equation}
where $\psi_\beta(x,y)=\abs{x-y}\abs{\log\pa{\frac{x}{y}}}^{1+\beta}$
\end{definition}

As we explain below, the condition that  $\br{F_N}_{N\in\N}$ have the log-power property of order $\beta>0$ is a quantitative chaoticity condition, and it can be verified when 
$\br{F_N}_{N\in\N}$  is a family of {\em normalized tensor product states}, as constructed in \cite{CCRLV} from a suitable probability density $f$  on $\R$.  The conditions on $f$ that are required for this are propagated by the Kac-Boltzmann equation, and this crucial fact allows us to side-step the interesting question as to where the  log-power property of order $\beta$ might be propagated by the Kac Master Equation. 

In what follows we will use the notation $M_k(f)=\int_{\R}\abs{v}^k f(v)dv$ for any non-negative function on $\R$.   The entropy-entropy production bound with an $N$-dependent constant is:
\begin{theorem}\label{thm:mainlogscale}
Let $F_N\in P\pa{\kac}$ be symmetric and log-scalable with associated constant $C_F>0$. 
\begin{enumerate}[(i)]
\item Assume there exists $k>1$ such that
${\displaystyle M_{2k} = \sup_{N} M_{2k}\pa{\Pi_1\pa{F_N}}<\infty}$.
Then, 
\begin{equation}\label{eq:almost cerc on sphere}
\frac{D_{N,\gamma}(F_N)}{N} \geq \mathcal{C}_{k,\gamma,N}\pa{\frac{H_N(F_N)}{N}}^{1+\frac{1-\gamma}{k-1}},
\end{equation}
with $\mathcal{C}_{k,\gamma,N}=\frac{k-1}{3^{\frac{2k-\gamma k-\gamma}{k-1}}\pa{1-\gamma}}\pa{\frac{1-\gamma}{k-\gamma}}^{\frac{k-\gamma}{k-1}}\frac{1}{\pa{2C_F}^{\frac{1-\gamma}{k-1}}\pa{1+2M_{2k}}^{\frac{1-\gamma}{k-1}}} N^{{\frac{\gamma-1}{k-1}}}$.
\item AIf for  $a,\mu>0$, 
${\displaystyle M_{exp} = \sup_{N} \int_{\kac}e^{a\abs{v_1}^\mu}F_N\pa{v_1,\dots,v_N}d\sigma^N<\infty}$,
then
\begin{equation}\label{eq:almost cerc on sphere exp moment}
\frac{D_{N,\gamma}(F_N)}{N} \geq \frac{1}{6\cdot 4^{1-\gamma}\pa{\frac{2}{a}\log\pa{96C_F  \pa{\frac{4}{a \mu e}}^{\frac{2}{\mu}} e^{\frac{a}{2^{\mu/2}}}M_{exp}}+\frac{2}{a}\log\pa{\frac{N}{\pa{\frac{H_N(F_N)}{N}}}}}^{\frac{2(1-\gamma)}{\mu}}} \frac{H_N(F_N)}{N}
\end{equation}
\end{enumerate}
\end{theorem}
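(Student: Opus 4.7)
The strategy is to trade the weak weight $(1+v_1^2+v_2^2)^\gamma$ for the strong ``hard-sphere'' weight $(1+v_1^2+v_2^2)$, paying only a factor of $(1+R)^{\gamma-1}$ on a truncated region $\{v_1^2+v_2^2\leq R\}$. Since $\gamma\leq 1$, the map $x\mapsto(1+x)^{\gamma-1}$ is non-increasing, so pointwise on this region
\[
(1+v_1^2+v_2^2)^{\gamma}=(1+v_1^2+v_2^2)^{\gamma-1}(1+v_1^2+v_2^2)\geq(1+R)^{\gamma-1}(1+v_1^2+v_2^2).
\]
Using permutation symmetry to reduce $D_{N,\gamma}$ to the pair $(i,j)=(1,2)$ and substituting, one obtains
\[
\frac{D_{N,\gamma}(F_N)}{N}\geq(1+R)^{\gamma-1}\left[\frac{D_{N,1}(F_N)}{N}-\mathcal{E}_R(F_N)\right],
\]
where $\mathcal{E}_R(F_N)$ is the $D_{N,1}$ integrand restricted to $\{v_1^2+v_2^2>R\}$. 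The main term is then controlled by Villani's uniform-in-$N$ estimate at $\gamma=1$ recalled in (\ref{epkme}), namely $D_{N,1}(F_N)\geq C_1H_N(F_N)$.

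For the tail $\mathcal{E}_R$, the log-scalable hypothesis yields the crude estimate $\psi(F_N,F_N\circ R_{1,2,\theta})\leq 2C_FN(F_N+F_N\circ R_{1,2,\theta})$, and the rotation invariance of $d\sigma^N$ under $R_{1,2,\theta}$ (which preserves $v_1^2+v_2^2$, hence the indicator) collapses $\mathcal{E}_R$ to an integral against the two-particle marginal:
\[
\mathcal{E}_R(F_N)\leq 2C_FN\int_{\R^2}\mathbf{1}_{\{v_1^2+v_2^2>R\}}(1+v_1^2+v_2^2)\,\Pi_2(F_N)(v_1,v_2)\,dv_1\,dv_2.
\]
For part~(i), on $\{v_1^2+v_2^2>R\}$ one writes $(1+v_1^2+v_2^2)\leq(1+v_1^2+v_2^2)^k/R^{k-1}$ and applies the elementary bound $(1+a+b)^k\leq 3^{k-1}(1+a^k+b^k)$ together with the moment hypothesis to obtain $\mathcal{E}_R(F_N)\leq 2\cdot 3^{k-1}C_F(1+2M_{2k})N/R^{k-1}$. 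For part~(ii), Markov's inequality applied to the exponential moment gives $\Pi_1(F_N)(|v|>T)\leq M_{exp}\,e^{-aT^\mu}$; a union bound together with absorbing the polynomial factor $(1+v_1^2+v_2^2)$ into a fraction of the exponential (via Cauchy--Schwarz) then produces a bound of the form $\mathcal{E}_R(F_N)\leq C\,C_FN\,(1+R)\,M_{exp}\,e^{-a(R/2)^{\mu/2}/2}$.

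One then optimizes $R$ to balance the two competing terms inside the bracket. In part~(i), choosing $R$ so that the tail equals a fixed fraction of $C_1H_N/N$, i.e.\ $R^{k-1}\sim N^2(1+2M_{2k})C_F/H_N$, yields $(1+R)^{\gamma-1}\sim(H_N/N^2)^{(1-\gamma)/(k-1)}$. Multiplying by $(1+R)^{\gamma-1}\cdot\tfrac12 C_1H_N/N$ and collecting $N$-powers produces the exponent $2(\gamma-1)/(k-1)+(1-\gamma)/(k-1)=(\gamma-1)/(k-1)$ announced in (\ref{eq:almost cerc on sphere}); the precise prefactor $\mathcal{C}_{k,\gamma,N}$ arises from the calculus optimization $\frac{d}{dR}\bigl[(1+R)^{\gamma-1}(x-A/R^{k-1})\bigr]=0$, whose critical value is $\beta=(k-\gamma)/(1-\gamma)$, and which delivers exactly the factor $\tfrac{k-1}{k-\gamma}\bigl(\tfrac{1-\gamma}{k-\gamma}\bigr)^{(1-\gamma)/(k-1)}$. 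In part~(ii) the analogous balance forces $R\sim(2/a)^{2/\mu}\bigl[\log(N/(H_N/N))+\mathrm{const}\bigr]^{2/\mu}$, and hence $(1+R)^{\gamma-1}$ contributes precisely the factor $\bigl[\log(N/(H_N/N))+\mathrm{const}\bigr]^{-2(1-\gamma)/\mu}$ appearing in (\ref{eq:almost cerc on sphere exp moment}).

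The main obstacle is bookkeeping the numerical constants (the $3^{(2k-\gamma k-\gamma)/(k-1)}$, the placement of $2C_F$, $(1+2M_{2k})$, and the Villani constant $C_1$) into the announced form of $\mathcal{C}_{k,\gamma,N}$, together with carefully absorbing the polynomial factor $(1+v_1^2+v_2^2)$ into the exponential tail in part~(ii) so that only the stated logarithmic dependence survives. Conceptually the argument is clean: the crude log-scalable bound $\psi\leq 2C_FN(F_N+F_N\circ R_{1,2,\theta})$ is harmless because $C_F$ enters the final estimate only through the small exponent $(1-\gamma)/(k-1)$ in part~(i) (logarithmically in part~(ii)), and the moment or exponential-moment hypotheses enter only through the tail after integration against $\Pi_2(F_N)$.
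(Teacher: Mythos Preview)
Your proof is correct and follows essentially the same route as the paper: split $D_{N,1}$ at level $\lambda = 1+R$, use log-scalability to crush the tail $\psi$-term into $2C_F N\int_{A_\lambda} (1+v_1^2+v_2^2)F_N\,d\sigma^N$, bound this via the moment (resp.\ exponential-moment) hypothesis, optimize in the cut-off, and invoke Villani's $\gamma=1$ inequality $D_{N,1}\geq \tfrac13 H_N$. The only cosmetic differences are that the paper writes the starting inequality as an upper bound on $D_{N,1}$ and then minimizes $a\lambda^{1-\gamma}+b\lambda^{1-k}$ (your dual optimization of $(1+R)^{\gamma-1}(x-A/R^{k-1})$ yields the same prefactor, as $\tfrac{k-1}{1-\gamma}\bigl(\tfrac{1-\gamma}{k-\gamma}\bigr)^{(k-\gamma)/(k-1)}=\tfrac{k-1}{k-\gamma}\bigl(\tfrac{1-\gamma}{k-\gamma}\bigr)^{(1-\gamma)/(k-1)}$), and in part~(ii) the paper absorbs the linear factor via the elementary bound $\sup_{x\geq 0} x e^{-\lambda x^\mu} = (\lambda\mu e)^{-1/\mu}$ rather than Cauchy--Schwarz.
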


The entropy-entropy production bound with a  constant independent of $N$ is:

\begin{theorem}\label{thm:mainlofpower}
Let $F_N \in P\pa{\kac}$ have the log-power property of order $\beta$ with associated constant $C_F>0$. 
\begin{enumerate}[(i)]
\item If there exists $k>1+\frac{1}{\beta}$ such that
$M_{2k} = \sup_{N} M_{2k}\pa{\Pi_1\pa{F_N}}<\infty$.  Then
$$
\frac{D_{N,\gamma}(F_N) }{N} \geq \mathcal{C}_\epsilon \left(\frac{H_N(F_N)}{N} \right)^{1+\epsilon}
$$ where
${\displaystyle \epsilon = 1+\frac{(1-\gamma)(1+\beta)}{k\beta-(1+\beta)}}$
and
\begin{equation}\label{eq:almost cerc on sphere II}
\mathcal{C}_\epsilon = \frac{k\beta-(1+\beta)}{(1+\beta)(1-\gamma)}\pa{\frac{(1+\beta)(1-\gamma)}{k\beta-(1+\beta)}}^{\frac{k\beta-\gamma(1+\beta)}{k\beta-(1+\beta)}}
\frac{2^{\frac{1-\gamma}{k\beta-(1+\beta)}}}{3^{\frac{2k\beta-k\beta\gamma-\gamma(1+\beta)}{k\beta-(1+\beta)}}}
\frac{1}{C_F^{\frac{(1+\beta)(1-\gamma)}{k\beta-(1+\beta)}}\pa{1+2M_{2k}}^{\frac{\beta(1-\gamma)}{k\beta-(1+\beta)}}}\ .
\end{equation}
\item If for $a,\mu>0$
${\displaystyle M_{exp} = \sup_{N} \int_{\kac}e^{a\abs{v_1}^\mu}F_N\pa{\bm{v}}d\sigma^N<\infty}$, 
then 
\begin{equation}\label{eq:almost cerc on sphere exp moment II}
\frac{D_{N,\gamma}(F_N)}{N} 
\geq  \abs{\frac{2^{1+\mu}}{a}\log\pa{\frac{4C_F^{\frac{1+\beta}{\beta}}\pa{\frac{2^{2+\mu}(1+\beta)}{a \beta\mu e}}^{\frac{2(1+\beta)}{\beta\mu}}e^{\frac{a}{2^{\mu/2}}}M_{exp}}{\pa{\frac{H_N(F_N)}{6N}}^{\frac{1+\beta}{\beta}}}}}^{-\frac{2(1-\gamma)}{\mu}}\frac{H_N(F_N)}{N}
\end{equation}
\end{enumerate}
\end{theorem}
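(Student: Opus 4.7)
The plan is to reduce the entropy production inequality at parameter $\gamma$ to the case $\gamma = 1$, where Villani's bound \eqref{epkme} already delivers the $N$-independent Cercignani-type estimate $D_{N,1}(F_N)/N \geq C_1 H_N(F_N)/N$. Fix a threshold $R > 0$. On the low-energy region $\{v_1^2+v_2^2 \leq R\}$ the elementary pointwise inequality
$$(1+v_1^2+v_2^2)^\gamma \geq (1+R)^{\gamma-1}(1+v_1^2+v_2^2),$$
valid for $\gamma \in [0,1]$ because $x \mapsto (1+x)^{\gamma-1}$ is non-increasing, furnishes a pair-wise comparison between the integrands of $D_{N,\gamma}$ and $D_{N,1}$. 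Combining this with the permutation symmetry of $F_N$ and the invariance of $v_i^2 + v_j^2$ under $R_{i,j,\theta}$ yields the reduction
$$\frac{D_{N,\gamma}(F_N)}{N} \geq \frac{1}{(1+R)^{1-\gamma}}\pa{\frac{D_{N,1}(F_N)}{N} - \mathcal{E}(R)},$$
where $\mathcal{E}(R) := \frac{1}{4\pi}\int_{\kac}\int_0^{2\pi}(1+v_1^2+v_2^2)\,\psi(F_N,F_N\circ R_{1,2,\theta})\,\mathbf{1}_{v_1^2+v_2^2 > R}\, d\theta\, d\sigma^N$ isolates the high-energy defect.

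To control $\mathcal{E}(R)$ I would invoke the algebraic identity $\psi(x,y) = \psi_\beta(x,y)^{1/(1+\beta)}\abs{x-y}^{\beta/(1+\beta)}$ and apply H\"older's inequality with the conjugate exponents $1+\beta$ and $(1+\beta)/\beta$. Together with the log-power hypothesis $\int \psi_\beta\, d\sigma^N d\theta \leq C_F^{1+\beta}$ and the pointwise bound $\abs{F_N - F_N\circ R_{1,2,\theta}} \leq F_N + F_N \circ R_{1,2,\theta}$, this produces
$$\mathcal{E}(R) \leq \frac{C_F}{4\pi}\pa{\int_{\kac}\!\int_0^{2\pi}(1+v_1^2+v_2^2)^{(1+\beta)/\beta}(F_N + F_N \circ R_{1,2,\theta})\mathbf{1}_{v_1^2+v_2^2 > R}\, d\theta\, d\sigma^N}^{\beta/(1+\beta)}.$$
The joint invariance of $d\sigma^N$ and of $v_1^2+v_2^2$ under $R_{1,2,\theta}$ collapses the two summands into one, and a Markov-type tail estimate using the moment $M_{2k}$ -- admissible precisely because $k > 1+1/\beta = (1+\beta)/\beta$ leaves a positive residual exponent in the tail integral -- yields the clean bound
$$\mathcal{E}(R) \leq \frac{\kappa\, C_F\,(1+M_{2k})^{\beta/(1+\beta)}}{R^{[k\beta-(1+\beta)]/(1+\beta)}}$$
for an explicit constant $\kappa = \kappa(k,\beta)$. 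For part (ii) the polynomial Markov step is replaced by an exponential Chebyshev--Chernoff estimate: the subadditivity $(v_1^2+v_2^2)^{\mu/2} \leq |v_1|^\mu + |v_2|^\mu$ (valid for $\mu \leq 2$, with an analogous inequality up to a factor $2^{\mu/2-1}$ when $\mu > 2$) combined with Cauchy--Schwarz and the symmetry of $F_N$ transfers the single-variable hypothesis $M_{exp}$ into a bivariate tail bound, and the polynomial prefactor $(1+v_1^2+v_2^2)^{(1+\beta)/\beta}$ is absorbed via the explicit maximization of $x \mapsto (1+x)^p e^{-s x^{\mu/2}}$.

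To conclude, I would choose $R$ so that $\mathcal{E}(R) = \tfrac12\, D_{N,1}(F_N)/N$; combined with Villani's $\gamma=1$ bound this fixes $R \asymp (H_N(F_N)/N)^{-(1+\beta)/[k\beta-(1+\beta)]}$ up to constants depending on $C_F$ and $M_{2k}$, whence $(1+R)^{\gamma-1} \gtrsim (H_N(F_N)/N)^{(1-\gamma)(1+\beta)/[k\beta-(1+\beta)]}$, and substitution reproduces \eqref{eq:almost cerc on sphere II} with the announced constant $\mathcal{C}_\epsilon$. For (ii) the analogous optimization forces the implicit logarithmic choice $R \asymp [(2/a)\log(N/H_N(F_N))]^{2/\mu}$, yielding \eqref{eq:almost cerc on sphere exp moment II}. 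The main obstacle I anticipate is the careful bookkeeping required to guarantee that no step silently introduces $N$-dependence when one passes from the low-order marginal estimates -- where both the log-power and the moment hypotheses reside -- to integrals against the full product measure $d\sigma^N$; matching the sharp explicit form of $\mathcal{C}_\epsilon$ in the statement will additionally require precise tracking of the H\"older conjugate exponents throughout.
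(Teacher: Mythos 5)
Your proposal is correct and follows essentially the same route as the paper: the same low/high energy splitting at a threshold (the paper's $A_\lambda$ is your $\{v_1^2+v_2^2>R\}$ with $\lambda=1+R$), the same H\"older step with exponents $1+\beta$ and $(1+\beta)/\beta$ exploiting the log-power property together with $\abs{F_N-F_N\circ R_{1,2,\theta}}\leq F_N+F_N\circ R_{1,2,\theta}$ and rotation invariance, the same Markov (resp.\ Chernoff) tail bound, and the same appeal to Villani's $\gamma=1$ inequality. The only cosmetic difference is that for part (i) the paper optimizes the threshold exactly by minimizing $ax^{1-\gamma}+bx^{1-k\beta/(1+\beta)}$ (which is how the precise constant $\mathcal{C}_\epsilon$ arises), while you fix the threshold so the tail term is half the main term, exactly as the paper itself does in part (ii).
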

Because we do not know that the log-power property of order $\beta$ is propagated by the Kac-Master equation, we cannot use
Theorem ~\ref{thm:mainlogscale} to get a rate of entropic convergence for the Kac Walk that is independent of $N$. Nonetheless,
for the reasons discussed above, we can apply it to draw the following conclusion for the Kac-Boltzmann equation:

\begin{theorem}\label{thm:almost_cerc_kac_boltz}
Let $f\in P\pa{\R}$ be such that $M_2(f)=1$. Assume in addition that there exists $\beta>0$ and $k>1+1/\beta$ such that
$$M_{\max\pa{2k,k(1+\beta),4}}(f) <\infty,$$
that 
$$I(f) = \int_{\R}\frac{\pa{f^\prime(x)}^2}{f(x)}dx < \infty,$$
and that
$$f(v) \geq C e^{-\abs{v}^2} \quad\quad \forall v\in\R.$$
Then, there exists an explicit constant, $\mathcal{C}$, depending only on the parameters of the problems such that
\begin{equation}\label{eq:almost_cerc_kac_boltz}
D_\gamma(f) \geq \mathcal{C} H\pa{f|M}^{1+\frac{(1-\gamma)(1+\beta)}{k\beta-(1+\beta)}}.
\end{equation}
If the moment condition is replaced by the condition 
$M_{exp}(f)=\int_{\R}e^{a\abs{v}^\mu}f(v)dv <\infty$
for some $a,\mu>0$ then \eqref{eq:almost_cerc_kac_boltz} can be replaced by
\begin{equation}\label{eq:almost_cerc_kac_boltz_exp}
D_\gamma(f) \geq \mathcal{C}_1 H\pa{f|M}\abs{\log\pa{\frac{\mathcal{C}_2}{H\pa{f|M}}}}^{-\frac{1-\gamma}{\mu}}.
\end{equation}
Moreover, if $f(t)$ is the solution to the Kac-Boltzmann equation then the constants in \eqref{eq:almost_cerc_kac_boltz} can be chosen to be independent of time and as such a rate of convergence to equilibrium can be obtained.
\end{theorem}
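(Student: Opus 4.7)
The plan is to realize Theorem~\ref{thm:almost_cerc_kac_boltz} as a limit of the $N$-independent Kac-Walk inequality in Theorem~\ref{thm:mainlofpower}. Given $f$ on $\R$ satisfying the hypotheses, I would construct the conditioned tensor product
$$F_N(\bm{v}) = \frac{1}{Z_N}\prod_{j=1}^N f(v_j),\qquad \bm v\in \kac,$$
as in \cite{CCRLV}. Under the stated moment, Fisher-information and lower Gaussian-bound assumptions, the local central limit machinery developed there provides $Z_N = (2\pi)^{-1/2}N^{-1/2}e^{N H(f|M_1)}(1+o(1))$, the $f$-chaoticity of $\{F_N\}$, and the quantitative convergences
$$\lim_{N\to\infty} \frac{H_N(F_N)}{N} = H(f|M_1), \qquad \lim_{N\to\infty}\frac{D_{N,\gamma}(F_N)}{N} = D_\gamma(f).$$
These two limits are the bridge between the walk and the Kac-Boltzmann equation and, together with Theorem~\ref{thm:mainlofpower}, immediately reduce the task to producing an $N$-uniform log-power constant $C_F$ and an $N$-uniform $2k$-moment bound for $\Pi_1(F_N)$.

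The moment control is the easy step: since $\Pi_1(F_N)\to f$ in a tight way with a density that is dominated by $(1+C/N)f$ plus an $O(1/N)$ correction (again following \cite{CCRLV}), the polynomial moment assumption on $f$ transfers to a uniform bound $\sup_N M_{2k}(\Pi_1(F_N)) < \infty$, and the analogous statement holds for the exponential moment. The main obstacle is to establish the log-power property of order $\beta$ uniformly in $N$. Here one uses the explicit form
$$\log \frac{F_N(\bm v)}{F_N(R_{1,2,\theta}\bm v)} = \bigl[\log f(v_1)+\log f(v_2)\bigr] - \bigl[\log f(v_1(\theta))+\log f(v_2(\theta))\bigr],$$
so the lower Gaussian bound $f(v)\geq Ce^{-v^2}$ forces $|\log f(v)|\le C'(1+v^2) + (\log f(v))_+$, while the Fisher-information bound (via the $L\log L$-Sobolev-type estimate $\int f(\log f)_+^{1+\beta}\,dv \le \mathcal{K}(I(f),M_{k(1+\beta)}(f))$, provable by splitting into $\{f\le 1\}$ and $\{f>1\}$ and using $I(f)$ with Hardy--Littlewood--Sobolev inequalities) supplies the missing integrability of $(\log f)_+^{1+\beta}$. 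Combining the two, and using the two-particle marginal $\Pi_2(F_N)\le (1+C/N)f\otimes f$ to bound the integral over $\kac$ by an integral in $\R^2$, one controls
$$\int_0^{2\pi}\!\!\int_{\kac}|F_N-F_N\circ R_{1,2,\theta}|\Bigl|\log\tfrac{F_N}{F_N\circ R_{1,2,\theta}}\Bigr|^{1+\beta}d\sigma^N d\theta$$
by an explicit constant depending only on $M_{k(1+\beta)}(f)$, $I(f)$, and the Gaussian-lower-bound constant. The condition $k>1+1/\beta$ is exactly what makes the needed H\"older exponents close.

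With $C_F$ and the moment bound in hand uniformly in $N$, Theorem~\ref{thm:mainlofpower} applied to $F_N$ gives $D_{N,\gamma}(F_N)/N \ge \mathcal{C}_\epsilon(H_N(F_N)/N)^{1+\epsilon}$, and passing to $N\to\infty$ yields \eqref{eq:almost_cerc_kac_boltz}; the exponential-moment case \eqref{eq:almost_cerc_kac_boltz_exp} is identical with part (ii) of Theorem~\ref{thm:mainlofpower} substituted in the last step. For the final assertion on propagation, one notes that each of the three structural hypotheses on $f$ is known to be propagated by the Kac-Boltzmann flow: polynomial and exponential moments by Povzner-type estimates applied to $\Q_\gamma$, finiteness of the Fisher information by the entropy-dissipation-Fisher-information monotonicity built into $\Q_\gamma$, and the lower Gaussian bound by the maximum-principle-like lower-bound estimates of Carleman/Pulvirenti-Wennberg type adapted to the Kac model. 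Consequently $\mathcal{C}$ in \eqref{eq:almost_cerc_kac_boltz} can be chosen uniformly along the orbit $t\mapsto f(\cdot,t)$, and integration of the resulting ODI $-\tfrac{d}{dt}H(f(t)|M_1)\ge \mathcal{C}\,H(f(t)|M_1)^{1+\epsilon}$ gives the advertised polynomial (respectively stretched-exponential) rate of convergence to $M_1$.
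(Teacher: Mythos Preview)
Your proposal follows essentially the same route as the paper: build the conditioned tensorisation $F_N=f^{\otimes N}/\mathcal Z_N$, use the local-CLT (``$g$-concentration'') machinery to get the strong entropic chaoticity limits $H_N(F_N)/N\to H(f|M)$ and $D_{N,\gamma}(F_N)/N\to \tfrac12 D_\gamma(f)$ (note the factor $\tfrac12$), verify the log-power property of order $\beta$ uniformly in $N$, apply Theorem~\ref{thm:mainlofpower}, and pass to the limit.

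Two small remarks. First, your detour through a ``Hardy--Littlewood--Sobolev / $L\log L$-Sobolev'' estimate to control $\int f(\log f)_+^{1+\beta}$ is unnecessary: the paper simply uses $\|f\|_\infty\le \sqrt{I(f)}$, so on $\{f\ge 1\}$ one has $|\log f|\le C_\epsilon f^\epsilon\le C_\epsilon I(f)^{\epsilon/2}$, which is all that is needed (this is packaged in the paper's Theorem~\ref{thm:conditionedlogpower} and Proposition~\ref{prop:logchaoticexample}). Second, for the time-uniformity of $\mathcal C$ the paper invokes only monotonicity of $I(f)$ along the flow \cite{McK} and uniform moment bounds \cite{Des}; your additional remark about propagation of the Gaussian lower bound is not unreasonable, but in the paper's bookkeeping the log-power constant is written purely in terms of $I(f)$ and $M_{k(1+\beta)}(f)$, so those two propagated quantities suffice.
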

Note that while the main result of Theorem \ref{thm:almost_cerc_kac_boltz} is exactly like Villani's result for the Boltzmann equation (and can probably be proved in a similar way), the method of proof we provide here not only harkens back to Kac's program and views inequalities \eqref{eq:almost_cerc_kac_boltz} and \eqref{eq:almost_cerc_kac_boltz_exp} as limit inequalities, but also uses strong connection to the field of applied probability, and in particular, to ideas of concentration of density functions on appropriate sets.

\subsection{Organization of the Paper}
The structure of the presented work is as follows: In Section \ref{sec:functionalineq} we will prove our main functional inequalities for the sphere, Theorems \ref{thm:mainlogscale} and \ref{thm:mainlofpower}, and discuss their suitability to the Kac Walk. In Section \ref{sec:g concentration} we will begin our preparation to prove Theorem \ref{thm:almost_cerc_kac_boltz} and define  new notions of concentration, and chaoticity that will play a crucial role in this work. Lastly, Section \ref{sec:vision} provides the proof of Theorem \ref{thm:almost_cerc_kac_boltz} and a short discussion on the connection between  entropic inequalities for the Kac Walk and for the Kac-Boltzmann equation.

\smallskip
\noindent{\bf Acknowledgement} We thank the anonymous referee for valuable suggestions that have improved the presentation.

\section{Functional Inequalities for the Kac Walk}\label{sec:functionalineq}
This section is dedicated to the proof of  Theorems \ref{thm:mainlogscale} and \ref{thm:mainlofpower}, which give an improvement to \eqref{epkme}. We will also discuss the implications of these Theorems to the Kac Walk.
We start by recalling a  Theorem from Villani's work, \cite{V} which is is the special case $\gamma=1$ of \eqref{epkme}:
\begin{theorem}\label{thm:Villani}
Let $F_N\in P\pa{\kac}$ such that $H_N\pa{F_N}<\infty$. Then
\begin{equation}\label{eq:Villani}
D_{N,1}\pa{F_N} \geq \frac{1}{3}H_N\pa{F_N}.
\end{equation}
\end{theorem}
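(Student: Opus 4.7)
\smallskip
\noindent\emph{Proof proposal.}
The plan is to reduce the bound to a one-dimensional inequality on the circle by conditioning on the complement of a fixed pair of coordinates, and to exploit the special structure of the weight at $\gamma=1$.

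The first observation is the identity that makes $\gamma=1$ distinguished: since $v_1^2+\cdots+v_N^2 = N$ on $\kac$,
\begin{equation*}
\sum_{i<j}\left(1+v_i^2+v_j^2\right) \;=\; \binom{N}{2} + (N-1)\sum_{k=1}^N v_k^2 \;=\; 3\binom{N}{2},
\end{equation*}
so the average weight over pairs equals exactly $3$; this is precisely where the constant $1/3$ will originate. Next, I would disintegrate $d\sigma_N$ over $\bm{w}=(v_k)_{k\neq i,j}$: the conditional law of $(v_i,v_j)$ given $\bm{w}$ is the uniform probability measure on the circle of radius $\rho_{ij}(\bm{w})=\sqrt{N-|\bm{w}|^2}$, and $R_{i,j,\theta}$ acts exactly as rotation by $\theta$ on this circle. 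On this circle $1+v_i^2+v_j^2 = 1+\rho_{ij}^2$ is constant, and the conditional density $G_{ij}(\phi\,|\,\bm{w})$ of $F_N$ becomes a density on $S^1$ with respect to the uniform measure.

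The second key ingredient is a log-Sobolev-type bound for the rotation-averaging Dirichlet form on the circle. For any probability density $g$ on $[0,2\pi]$ with respect to uniform measure, a direct computation using independence of two uniform variables gives
\begin{equation*}
\frac{1}{(2\pi)^2}\int_0^{2\pi}\!\!\int_0^{2\pi}\psi\bigl(g(\phi),g(\phi+\theta)\bigr)\,d\phi\,d\theta
\;=\; 2H(g\,|\,U_{S^1}) + 2H(U_{S^1}\,|\,g)
\;\geq\; 2H(g\,|\,U_{S^1}),
\end{equation*}
the last step by nonnegativity of relative entropy. Plugging this into the conditional per-pair entropy production yields a lower bound proportional to $(1+\rho_{ij}^2)\,H(G_{ij}(\cdot\,|\,\bm{w})\,|\,U_{S^1})$, integrated against the marginal of $\bm{w}$.

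To assemble the global bound I would use the chain rule for relative entropy,
\begin{equation*}
H_N(F_N) \;=\; \int H\bigl(G_{ij}(\cdot\,|\,\bm{w})\,\big|\,U_{S^1}\bigr)\,d\mu_{ij}(\bm{w}) \;+\; H_{N-2}^{\text{marg}}(d\mu_{ij}),
\end{equation*}
combined with permutation symmetry so that averaging over the choice of pair $(i,j)$ couples the prefactor $N/\binom{N}{2}$ in $D_{N,1}$ with the weight identity above, producing exactly the factor $3$ needed to match $\frac{1}{3}H_N(F_N)$. I expect the main obstacle to be cleanly absorbing the residual marginal entropy $H_{N-2}^{\text{marg}}(d\mu_{ij})$, which lives on a lower-dimensional sphere and is not directly controlled by the conditional per-pair inequality. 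Two natural remedies are (a) to average the one-step inequality over all $\binom{N}{2}$ pairs and show, using symmetry, that the sum of conditional entropies already recaptures the full entropy of $F_N$ up to the factor $3$; or (b) to iterate the reduction on the $(N-2)$-sphere, where the weight identity and the $\gamma=1$ structure persist, and control the recursion. Tracking constants through either route should yield the sharp constant $1/3$.
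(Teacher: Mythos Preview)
The paper does \emph{not} prove this theorem; it merely recalls it from Villani's work \cite{V} as the special case $\gamma=1$ of the already-cited inequality (1.10), and then uses it as a black box in the proofs of Theorems~1.5 and~1.6. So there is no ``paper's own proof'' to compare against.

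That said, your sketch does isolate the right building blocks of Villani's actual argument: disintegration over the complement of a pair, the fact that a single random rotation on the fibre circle drives any conditional density to uniform in one step (your identity $\tfrac{1}{(2\pi)^2}\iint\psi(g,g(\cdot+\theta)) = 2H(g|U)+2H(U|g)$), and the algebraic identity $\sum_{i<j}(1+v_i^2+v_j^2)=3\binom{N}{2}$ that pins down the constant $1/3$ for $\gamma=1$.

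The genuine gap is exactly the one you flag yourself: you have, per pair, an inequality bounding the weighted conditional entropy $\int (1+\rho_{ij}^2)\,H(G_{ij}(\cdot\,|\,\bm{w})\,|\,U)\,d\mu_{ij}$, but you have not shown how to pass from the sum of these over pairs to $H_N(F_N)$ with the correct constant. Your option~(a) is vague as stated --- the weight $(1+\rho_{ij}^2)$ sits \emph{inside} the $\bm{w}$-integral and multiplies the pointwise conditional entropy, so the identity $\sum(1+v_i^2+v_j^2)=3\binom{N}{2}$ does not combine with the chain rule in any obvious way. Your option~(b) is closer to what Villani actually does, but note that the marginal $\mu_{ij}$ is a density on the \emph{ball} $\{|\bm{w}|\le\sqrt{N}\}\subset\R^{N-2}$ with a non-uniform reference measure, not on a lower-dimensional sphere, so ``iterate on the $(N-2)$-sphere'' requires an additional reorganisation before the induction can run. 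This reorganisation, and the recursion that closes with the sharp constant, is precisely the content of Villani's proof in \cite{V}; your outline stops just short of it.
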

With this at hand, we we proceed to prove our theorems.
\begin{proof}[Proof of Theorem \ref{thm:mainlogscale}]
$(i)$ For a given $\lambda>0$ we define $A_{\lambda}=\br{v_1,v_2\in \R | 1+v_1^2+v_2^2 > \lambda}$. Using the symmetry of $F_N$ we find that
\begin{equation}\label{eq:mainestimationformaintheorems}
\begin{gathered}
D_{N,1}(F_N)=\frac{N}{4\pi} \int_{\kac}\int_{0}^{2\pi}\pa{1+v_1^2+v_2^2}\psi\pa{F_N,F_N\circ R_{1,2,\theta}}d\sigma^N d\theta\\
\leq \lambda^{1-\gamma} D_{N,\gamma}(F_N) +\frac{N}{2\pi}\int_{\kac \cap A_{\lambda}}\int_{0}^{2\pi}\pa{1+v_1^2+v_2^2}  \pa{F_N-F_N\circ R_{1,2,\theta}}\log F_Nd\sigma^N d\theta. 
\end{gathered}
\end{equation}
Using the log-scalability we find that
\begin{equation}\label{eq:almost cerc on sphere I}
D_{N,1}\pa{F_N}\leq \lambda^{1-\gamma} D_{N,\gamma}(F_N) + 2C_F N^2\int_{\kac\cap A_{\lambda}}\pa{1+v_1^2+v_2^2} F_Nd\sigma^N.
\end{equation}
Thus, 
\begin{equation}\nonumber
\begin{gathered}
\frac{D_{N,1}(F_N)}{N} \leq \lambda^{1-\gamma} \frac{D_{N,\gamma}(F_N)}{N}+2C_F \frac{N}{\lambda^{k-1}} \int_{\kac}3^{k} \pa{\max\pa{1,v_1^2,v_2^2}}^k F_N d\sigma^N\\
\leq \lambda^{1-\gamma} \frac{D_{N,\gamma}(F_N)}{N}+2 \cdot 3^k C_F \frac{N}{\lambda^{k-1}} \pa{1+2M_{2k}}
\end{gathered}
\end{equation}
As the minimum of the function $g_{\gamma,k}(x)=a x^{1-\gamma} +b x^{1-k} $ on $(0,\infty)$, when $0\leq \gamma<1<k$, is attained at $x=\pa{(k-1)b/a(1-\gamma)}^{1/(k-\gamma)}$ and equals
$$\min_{x>0}g_{\gamma,k}(x)=\frac{k-\gamma}{1-\gamma}\pa{\frac{1-\gamma}{k-1}}^{\frac{k-1}{k-\gamma}}a^{\frac{k-1}{k-\gamma}}b^{\frac{1-\gamma}{k-\gamma}} $$
we conclude, by optimizing $\lambda$, that
$$\frac{D_{N,1}(F_N)}{N}  \leq  3^{\frac{k(1-\gamma)}{k-\gamma}}\pa{2C_F}^{\frac{1-\gamma}{k-\gamma}}N^{\frac{1-\gamma}{k-\gamma}}\pa{1+2M_\gamma}^{\frac{1-\gamma}{k-\gamma}}\frac{k-\gamma}{1-\gamma}\pa{\frac{1-\gamma}{k-1}}^{\frac{k-1}{k-\gamma}} \pa{\frac{D_N(F_N)}{N}}^{\frac{k-1}{k-\gamma}}. $$
Using \eqref{eq:Villani}, we conclude the proof of $(i)$. To show $(ii)$ we use \eqref{eq:mainestimationformaintheorems} with the fact that
$$\max_{x\geq 0}x e^{-\lambda x^{\mu}}=\frac{1}{\pa{\lambda \mu e}^{\frac{1}{\mu}}}$$
to conclude that for any $a_1,\mu_1>0$
$$\frac{D_{N,1}(F_N)}{N} \leq \lambda^{1-\gamma} \frac{D_{N,\gamma}(F_N)}{N}+2C_F N \pa{\frac{2}{a_1 \mu_1 e}}^{\frac{1}{\mu_1}}e^{-\frac{a_1\lambda^{\mu_1}}{2}} \int_{\kac}e^{a_1\pa{1+v_1^2+v_2^2}^{\mu_1}}F_N d\sigma^N.$$
Since if $\abs{v_1}\geq \abs{v_2}$ we have that
${\displaystyle e^{a_1\pa{1+v_1^2+v_2^2}^{\mu_1}} \leq e^{a_1 \pa{1+2\abs{v_1}^2}^{\mu_1}}  \leq e^{2^{\mu_1}a_1}e^{4^{\mu_1}a_1 \abs{v_1}^{2\mu_1}}}$
and symmetrically if $\abs{v_2}\geq \abs{v_1}$
${\displaystyle e^{a_1\pa{1+v_1^2+v_2^2}^{\mu_1}} \leq  e^{2^{\mu_1}a_1}e^{4^{\mu_1}a_1 \abs{v_2}^{2\mu_1}}}$.
We conclude that for $a_1=\frac{a}{2^{\mu}}$ and $\mu_1=\frac{\mu}{2}$
\begin{equation}\label{eq:almost_exp_log_scalability_I}
\frac{D_{N,1}(F_N)}{N} \leq \lambda^{1-\gamma} \frac{D_{N,\gamma}(F_N)}{N}+16C_F N \pa{\frac{4}{a \mu e}}^{\frac{2}{\mu}}e^{-\frac{a\lambda^{\frac{\mu}{2}}}{2^{1+\mu}}} e^{\frac{a}{2^{\mu/2}}}M_{exp}
\end{equation}
Using \eqref{eq:Villani} and choosing 
${\displaystyle\lambda=\pa{\frac{2^{1+\mu}}{a}\log\pa{\frac{96C_F N \pa{\frac{4}{a \mu e}}^{\frac{2}{\mu}} e^{\frac{a}{2^{\mu/2}}}M_{exp}}{\pa{\frac{H_N(F_N)}{N}}}}}^{\frac{2}{\mu}}}$
so that the second term on the right hand side of \eqref{eq:almost_exp_log_scalability_I} equals $\frac{H_N(F_N)}{6N}$ yields the desired result.
\end{proof}
\begin{proof}[Proof of Theorem \ref{thm:mainlofpower}]
The proof follows the lines of Theorem \ref{thm:mainlogscale}. \\
$(i)$ Using the same definition as those in the proof of Theorem \ref{thm:mainlogscale} we find that
$$D_{N,1}(F_N)\leq \lambda^{1-\gamma} D_{N,\gamma}(F_N) $$
$$+\frac{N}{4\pi}\int_{\kac\cap A_{\lambda}}\int_{0}^{2\pi}\pa{1+v_1^2+v_2^2} \psi\pa{F_N,F_N\circ R_{1,2,\theta}}d\sigma^N d\theta \leq \lambda^{1-\gamma} D_{N,\gamma}(F_N)$$
\begin{equation}\label{eq:almost cerc on sphere imp}
\begin{gathered}
 + \frac{N}{2}\pa{\frac{1}{2\pi}\int_{0}^{2\pi}\int_{\kac}\abs{\log\pa{F_N}-\log\pa{F_N\circ R_{1,2,\theta}}}^{1+\beta}\abs{F_N-F_N\circ R_{1,2,\theta}}d\sigma^N d\theta}^{\frac{1}{1+\beta}}\\
\pa{2\int_{\kac\cap A_{\lambda}}\pa{1+v_1^2+v_2^2}^{\frac{1+\beta}{\beta}} F_Nd\sigma^N}^{\frac{\beta}{1+\beta}}.
\end{gathered}
\end{equation}
Using the log-power property, and the fact that $\pa{1+v_1^2+v_2^2}^{\eta} \leq 3^{\eta}\pa{1+\abs{v_1}^{2\eta}+\abs{v_2}^{2\eta}}$ we find that
$$\frac{D_{N,1}(F_N)}{N} \leq \lambda^{1-\gamma} \frac{D_{N,\gamma}(F_N)}{N}+\frac{2^{\frac{\beta}{1+\beta}}3^{\frac{k\beta}{1+\beta}}C_F}{2} \frac{1}{\lambda^{\frac{k\beta}{1+\beta}-1}} \pa{1+2M_{2k}}^{\frac{\beta}{1+\beta}}.$$
Optimising over $\lambda$ yields
$$\frac{D_{N,1}(F_N)}{N}  \leq  \frac{k\beta-\gamma(1+\beta)}{(1+\beta)(1-\gamma)}\pa{\frac{(1+\beta)(1-\gamma)}{k\beta-(1+\beta)}}^{\frac{k\beta-(1+\beta)}{k\beta-\gamma(1+\beta)}}
\frac{3^{\frac{k\beta(1-\gamma)}{k\beta-\gamma(1+\beta)}}C_F^{\frac{(1+\beta)(1-\gamma)}{k\beta-\gamma(1+\beta)}}}{2^{\frac{1-\gamma}{k\beta-\gamma(1+\beta)}}}\pa{1+2M_{2k}}^{\frac{\beta(1-\gamma)}{k\beta-\gamma(1+\beta)}}
 \pa{\frac{D_{N,\gamma}(F_N)}{N}}^{\frac{k\beta-(1+\beta)}{k\beta-\gamma(1+\beta)}}. $$
from which the result follows with \eqref{eq:Villani}.\\
$(ii)$ Again, like in the proof of Theorem \ref{thm:mainlogscale} we find that the log-power property implies that for any $a_1,\mu_1>0$
$$\frac{D_{N,1}(F_N)}{N} \leq \lambda^{1-\gamma} \frac{D_{N,\gamma}(F_N)}{N}+C_F  \pa{2\pa{\frac{2(1+\beta)}{a_1 \beta \mu_1 e}}^{\frac{1+\beta}{\beta \mu_1}}e^{-\frac{a_1\lambda^{\mu_1}}{2}} \int_{\kac}e^{a_1\pa{1+v_1^2+v_2^2}^{\mu_1}}F_N d\sigma^N}^{\frac{\beta}{1+\beta}},$$
from which we get with the choice of  $a_1=\frac{a}{2^{\mu}}$ and $\mu_1=\frac{\mu}{2}$
$$\frac{D_{N,1}(F_N)}{N} \leq \lambda^{1-\gamma} \frac{D_{N,\gamma}(F_N)}{N}+4^{\frac{\beta}{1+\beta}}C_F  \pa{\frac{2^{2+\mu}(1+\beta)}{a \beta\mu e}}^{\frac{2}{\mu}}e^{-\frac{\beta a\lambda^{\frac{\mu}{2}}}{(1+\beta)2^{1+\mu}}} e^{\frac{\beta a}{(1+\beta)2^{\mu/2}}}M_{exp}^{\frac{\beta}{1+\beta}}$$
Choosing 
$$\lambda=\abs{\frac{2^{1+\mu}}{a}\log\pa{\frac{4C_F^{\frac{1+\beta}{\beta}}\pa{\frac{2^{2+\mu}(1+\beta)}{a \beta\mu e}}^{\frac{2(1+\beta)}{\beta\mu}}e^{\frac{a}{2^{\mu/2}}}M_{exp}}{\pa{\frac{H_N(F_N)}{6N}}^{\frac{1+\beta}{\beta}}}}}^{\frac{2}{\mu}}$$
so that the second term on the right is less than or equals to $\frac{H_N(F_N)}{6N}$ yields the desired result.
\end{proof}
Now that we have shown the functional inequalities, the first question we are facing is - Will any of them help gain an explicit rate of convergence to equilibrium in the Kac Walk? In order for that to be true, one will need to show that the conditions of the theorems are propagated via the flow of the master equation. This is indeed the case for inequality \eqref{eq:almost cerc on sphere}.
\begin{theorem}\label{thm:rateofconvergenceonKac}
Let $F_N\in P\pa{\kac}$ be a log-scalable function with finite first marginal moment of order $2k>2$. Then, if $F_N(t)$ is the solution to the master equation with initial datum $F_N$, $F_N(t)$ is log-scalable with the same constant as that of $F_N$, and $\M_{2k}=\sup_{N}\sup_{t\geq 0} M_{2k}\pa{\Pi_1\pa{F_N(t)}}<\infty$. As a result, there exists a constant $C>0$ that is independent in $N$ such that
\begin{equation}\label{eq:rateofconvergenceonKac}
\frac{H_N\pa{F_N(t)}}{N} \leq \pa{\pa{\frac{H_N\pa{F_N}}{N}}^{\frac{\gamma-1}{k-1}}+CN^{\frac{\gamma-1}{k-1}}t}^{-\frac{k-1}{1-\gamma}}.
\end{equation}
\end{theorem}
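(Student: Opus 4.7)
The plan is threefold: first, propagate log-scalability along the semigroup; second, propagate the first-marginal moment $M_{2k}$ uniformly in $N$ and $t$; third, apply Theorem \ref{thm:mainlogscale}(i) at each time and integrate the resulting Bernoulli-type ODE.

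For Step 1, since $\L_{N,\gamma}$ is a self-adjoint, conservative Markov generator on $\kac$ (the walk is reversible with respect to $d\sigma_N$), the density at time $t$ admits the representation $F_N(\bm{v},t) = \int F_N(\bm{w}) p_t(\bm{w},\bm{v}) \, d\sigma_N(\bm{w})$ where $p_t(\bm{w},\cdot)$ is a probability density. By reversibility $\int p_t(\bm{w},\bm{v}) \, d\sigma_N(\bm{w}) = 1$ as well, so the pointwise two-sided bound $e^{-C_F N} \le F_N \le e^{C_F N}$ is preserved with the \emph{same} constant $C_F$. Hence $|\log F_N(\bm{v},t)| \le C_F N$ for all $t\ge 0$.

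For Step 2, by symmetry and self-adjointness, $\frac{d}{dt} M_{2k}(\Pi_1(F_N(t))) = \int (\L_{N,\gamma}v_1^{2k})\,F_N(t)\,d\sigma_N$. Using the polar identity
\begin{equation*}
\frac{1}{2\pi}\int_0^{2\pi} (v_1\cos\theta + v_j\sin\theta)^{2k}\,d\theta \;=\; c_k\, (v_1^2+v_j^2)^k, \qquad c_k = \binom{2k}{k}\bigl/4^k < 1,
\end{equation*}
one obtains a Povzner-type decomposition in which $v_1^{2k}$ carries a negative coefficient of size $-2(1-c_k)$, while the remaining pure cross-terms $v_1^{2m}v_j^{2(k-m)}$, $0\le m \le k-1$, can be controlled via Young's inequality by $M_{2k}(\Pi_1(F_N))$ and $M_{2(k-1)}(\Pi_1(F_N))$. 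Absorbing the moment-increasing factor $(1+v_1^2+v_j^2)^\gamma \le C\,(1 + v_1^{2\gamma} + v_j^{2\gamma})$ (with $\gamma \le 1$) into Young's inequality yields a closed differential inequality
\begin{equation*}
\frac{d}{dt} M_{2k}(\Pi_1(F_N(t))) \;\le\; -\lambda_k\,M_{2k}(\Pi_1(F_N(t))) + \kappa_k,
\end{equation*}
with $\lambda_k,\kappa_k>0$ independent of $N$. Consequently $\M_{2k} := \sup_N\sup_{t\ge 0} M_{2k}(\Pi_1(F_N(t))) \le \max(M_{2k}(\Pi_1(F_N)), \kappa_k/\lambda_k) < \infty$.

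For Step 3, applying Theorem \ref{thm:mainlogscale}(i) at each time with the propagated log-scalability constant $C_F$ and the uniform moment bound $\M_{2k}$, and writing $h(t) := H_N(F_N(t))/N$ and $\epsilon := (1-\gamma)/(k-1)$, one gets
\begin{equation*}
h'(t) \;=\; -\frac{D_{N,\gamma}(F_N(t))}{N} \;\le\; -c\,N^{-\epsilon}\,h(t)^{1+\epsilon},
\end{equation*}
with $c>0$ independent of $N$ and $t$. Integrating the Bernoulli ODE via $\frac{d}{dt}(h^{-\epsilon}) \ge \epsilon c\,N^{-\epsilon}$ yields $h(t)^{-\epsilon} \ge h(0)^{-\epsilon} + \epsilon c\,N^{-\epsilon} t$, which, after taking the $-1/\epsilon$ power and recalling $-\epsilon = (\gamma-1)/(k-1)$, is exactly \eqref{eq:rateofconvergenceonKac}.

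The main obstacle is Step 2. On the sphere there is no free transport to produce Povzner dispersion, and for $\gamma>0$ the collision rate $(1+v_i^2+v_j^2)^\gamma$ itself pushes mass to high energies, so the algebraic gain $-(1-c_k)$ coming from the collision kernel must be genuinely weighed against both the $\gamma$-weight and the contribution of the two-particle marginal $\Pi_2(F_N)$, which appears in the cross-terms and cannot be treated by $\Pi_1$ alone without a Young-type splitting. Careful use of Young's inequality together with permutation symmetry is the key technical device that closes the bound at the level of $M_{2k}(\Pi_1(F_N))$.
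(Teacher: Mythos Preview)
Your proof is correct and follows essentially the same three-step structure as the paper: propagate the two-sided bounds $e^{-C_FN}\le F_N\le e^{C_FN}$ via the Markov semigroup (the paper writes out the series $e^{-Nt}\sum_n \frac{(Nt)^n}{n!}Q^nF_N$ and uses that $Q$ fixes constants and is monotone; you use transition densities and reversibility---the same observation in different clothing), then invoke uniform-in-$N$ moment propagation, then integrate the Bernoulli-type differential inequality furnished by Theorem~\ref{thm:mainlogscale}(i). The only substantive difference is Step~2: the paper simply asserts that boundedness of moments is ``a known property of the master equation'' and moves on, whereas you sketch a Povzner argument, which is indeed the standard route; your closing commentary slightly overstates the difficulty, since the weight $(1+v_i^2+v_j^2)^\gamma$ multiplies the good negative term $-(1-2c_k)(v_1^{2k}+v_2^{2k})$ just as it does the cross-terms, so it \emph{helps} close the inequality rather than competing with it.
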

\begin{proof}
The boundedness of the moments is a known property of the master equation. As for the log-scalability, it is easy to see that $F_N$ is log-scalable if and only if there exists $C_F>0$ such that
$$e^{-C_F N} \leq F_N\pa{\bm{v}} \leq e^{C_F N}$$
for all $\bm{v}\in\kac$. We claim that that the solutions to the master equation propagate lower and upper bounds. Indeed, writing
$\L_{N,0}=N\pa{Q-I}$,
one notices that for any $C\in \R$, $Q(C)=C$ and if $F\geq G \geq 0$ then $Q\pa{F}\geq Q\pa{G}\geq 0$. Thus, if $F_N(0) \geq C>0$
$$F_N(t) =e^{-Nt}\sum_{n=0}^\infty \frac{N^n t^n  Q^{n}\pa{F_N(0)}}{n!} \geq  e^{-Nt}\sum_{n=0}^\infty \frac{N^n t^n  Q^{n}\pa{C}}{n!}=C.$$
A similar argument shows the propagation of a lower bound, and we conclude  the propagation of the log-scalability. Thus, using Theorem \ref{thm:mainlogscale} we conclude the existence of a constant $C>0$, independent in $N$ such that
$$\frac{D_{N,\gamma}\pa{F_N(t)}}{N} \geq \frac{C}{N^{\frac{1-\gamma}{k-1}}}\pa{\frac{H_N\pa{F_N(t)}}{N}}^{1+\frac{1-\gamma}{k-1}}.$$
Since $D_N\pa{F_N(t)}= - \frac{d}{dt}H_N\pa{F_N(t)}$, inequality \eqref{eq:rateofconvergenceonKac} is obtained.
\end{proof}

\begin{remark}\label{rem:betterthanvillanirate}
It is important to notice, as seen in  the proof of Theorem \ref{thm:rateofconvergenceonKac}, that the concept of log-scalability is mainly
for obtaining lower  bounds on $F_N$. This corresponds to the lower bound assumption on $f(v)$ in Villani's work \cite{V}. Moreover, 
 the right hand side of \eqref{eq:rateofconvergenceonKac} behaves like $N t^{\frac{k-1}{\gamma-1}}$ in those cases where $F_N$ is 'chaotic enough' so that $\frac{H_N(F_N)}{N}\approx H(f|M_1)$. While this algebraic rate of convergence is not as good as that obtained from Villani's inequality, \eqref{epkme}, for long times - it is superior to it in the relationship between $N$ and $t$! Indeed, Villani's inequality shows significant decay around $t\approx N^{1-\gamma}$, while our new result gives significant decay around $t\approx N^{\frac{1-\gamma}{k-1}}$.
\end{remark}

Concerning  the log-power property invoked Theorem \ref{thm:mainlofpower}, at this current stage we have no proof that this property propagates. Moreover, we believe that it is not the case for a fixed $\beta>0$, though this may be true if $\beta$ is allowed to change and to depend on $t$. Theorem \ref{thm:mainlofpower} is enough to understand rates of convergence to equilibrium in the Kac Walk's limit equation - The Kac Boltzmann equation.
The next section is dedicated to setting up the tools, mainly special states on Kac's sphere, that will provide the link to achieve this goal.

\section{Conditioned Tensorisation and $g$-Concentration.}\label{sec:g concentration}
In this section we introduce a special  family of densities on ${\kac}$ that will play an important role in forming a bridge between Theorem \ref{thm:mainlofpower} and Theorem \ref{thm:almost_cerc_kac_boltz}. This class of densities  has been in the forefront of the study of the Kac Walk in the past 10 years, and we single out a subclass of such densities that is particularly relevant here. More information concerning these densities can be found in  \cite{CCRLV,CE, Einav1}.

\begin{definition}\label{def: conditioned tensorisation}
Given $f\in P\pa{\R}$ with a unit second moment. We define \emph{the conditioned tensorisation of $f$} to be the probability density function $F_N\in P\pa{\kac}$
\begin{equation}\nonumber
F_N=\frac{f^{\otimes N}}{\mathcal{Z}_{N}\pa{f,\sqrt{N}}},
\end{equation}
where {\em{the normalization function of $f$}}, $\mathcal{Z}_N\pa{f,r}$, is defined as
${\displaystyle 
\mathcal{Z}_N\pa{f,r}=\int_{\mathbb{S}^{N-1}(r)}f^{\otimes N}d\sigma_r^N,
}$
with $d\sigma_r^N$ being the uniform probability measure on $\mathbb{S}^{N-1}(r)$.
\end{definition}
The conditioned tensorisation of $f$ captures the intuition that we expect chaotic states to satisfy $F_N\approx f^{\otimes N}$. Whether or not this is valid for a particular $f$ depends largely  on $\mathcal{Z}_N\pa{f,\sqrt{N}}$ which measures how concentrated $f^{\otimes N}$, the density function for the ensemble $\pa{V_1,\dots,V_N}$ where $\abs{V_i}$ are i.i.d random variables with density function $f$, on the sphere $\kac$, which represents the mean of the energy variable $\sum_{i=1}^N V_i^2$. The fact that in many cases, this concentration (expressed usually by tools of local Central Limit Theorems) is enough to understand phenomena in the Kac Walk is exploited in many recent works. We thus start our study of conditioned tensorisation with a new definition of this concentration phenomena:
\begin{definition}\label{def: g concentration on the sphere}
We say that a function $f\in P\pa{\R}$ with a unit second moment is {\em $g-$concentrated}  if the normalization function of $f$ is well defined and for any fixed $k$
\begin{equation}\label{eq: g concentration on the sphere}
\mathcal{Z}_{N-k}(f,\sqrt{u})=\frac{2}{\pa{N-k}^{\frac{1}{2}}\abs{\mathbb{S}^{N-k-1}}  u^{\frac{N-k-2}{2}}}\pa{ g\pa{u-(N-k),N-k}+\lambda_{N-k}(u)},
\end{equation}
where:
\begin{enumerate}[(i)]
\item $g$ is non-negative and bounded.
\item ${\displaystyle g(0,N)= \frac{1}{\sqrt{2\pi}} }$ for all $N$.
\item $\lim_{N\rightarrow\infty}\sup_{u}\abs{\lambda_{N}(u)}=0$.
\item For any $R>0$, ${\displaystyle \lim_{N\rightarrow\infty} \sup_{\abs{x}<R}\abs{g(x,N)-\frac{1}{\sqrt{2\pi}}}=0}$.
\end{enumerate}
\end{definition}

The following theorem in \cite{CCRLV} provides the examples of relevance here:
\begin{theorem}\label{thm: CCLLV approximation thm}
Let $f\in P\pa{\R}\cap L^p\pa{\R}$ for some $p>1$. Assume that $f$ has a unit second moment and a bounded fourth moment. Then $f$ is $g-$concentrated where 
$$ g(x,N)=\frac{e^{-\frac{x^2}{2N\Sigma^2}}}{\sqrt{2\pi}}  \quad{\rm and}\quad \Sigma^2=\int_\R v^4f(v)dv -1.$$
\end{theorem}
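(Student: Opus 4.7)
The plan is to reduce the statement to a uniform local central limit theorem (LCLT) for the sums $U_N=V_1^2+\dots+V_N^2$, where $V_1,\dots,V_N$ are i.i.d.\ with common density $f$, and then read off the four properties of Definition~\ref{def: g concentration on the sphere} from the explicit Gaussian limit.

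\textbf{Step 1: Polar reduction.} First I would disintegrate Lebesgue measure on $\R^N$ in polar coordinates $v=r\omega$, $r=\abs{v}$, and change variable $u=r^2$. This expresses the density of $U_N$ at $u$ as
\begin{equation*}
p_{U_N}(u)=\frac{u^{(N-2)/2}}{2}\,\abs{\mathbb{S}^{N-1}}\,\mathcal{Z}_N\pa{f,\sqrt{u}},
\end{equation*}
so that the target identity \eqref{eq: g concentration on the sphere} (for $k=0$) becomes equivalent to showing $\sqrt{N}\,p_{U_N}(u)=g(u-N,N)+\lambda_N(u)$ with $\sup_u \abs{\lambda_N(u)}\to 0$. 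The case of general $k$ follows from the same LCLT applied to $N-k$ in place of $N$.

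\textbf{Step 2: Identify the Gaussian.} The summand $V^2$ has mean $\int v^2 f(v)\,dv=1$ and variance $\Sigma^2=\int v^4 f(v)\,dv-1$, both finite by hypothesis. The natural Gaussian approximation is $\phi_N(u)=(2\pi N\Sigma^2)^{-1/2}\exp\bigl(-(u-N)^2/(2N\Sigma^2)\bigr)$, so that $\sqrt{N}\phi_N(u)$ matches the claimed Gaussian kernel $g(u-N,N)$ up to the $\Sigma$ normalisation convention. Properties (i), (ii), (iv) of $g$-concentration then follow by direct inspection: non-negativity and boundedness are obvious, $g(0,N)$ is the standard normalisation, and for fixed $R$ we have $\sup_{\abs{x}\le R} \bigl|e^{-x^2/(2N\Sigma^2)}-1\bigr|\to 0$ as $N\to\infty$.

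\textbf{Step 3: Uniform LCLT.} It remains to prove $\sup_u \abs{\sqrt{N}p_{U_N}(u)-\sqrt{N}\phi_N(u)}\to 0$. I would run the classical Gnedenko argument: writing $p_{U_N}(u)=(2\pi)^{-1}\int_{\R} e^{-i\xi u}\hat h(\xi)^N\,d\xi$ with $\hat h$ the characteristic function of $V^2$, split the integral at $\abs{\xi}\le\delta$ (where a quadratic Taylor expansion of $\log\hat h$ about $0$ produces the Gaussian upon rescaling) and $\abs{\xi}>\delta$ (where one uses $\abs{\hat h(\xi)}\le \eta<1$ uniformly together with an integrability bound). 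The hypothesis $f\in L^p$ with $p>1$ is precisely what supplies this integrability: the density of $V^2$ is $h(y)=\bigl(f(\sqrt{y})+f(-\sqrt{y})\bigr)/(2\sqrt{y})$ on $y>0$, and one shows by a convolution-smoothing bootstrap that a finite convolution power $h^{*m}$ is bounded, equivalently $\hat h\in L^q(\R)$ for some $q<\infty$; the tail contribution is then bounded by $\eta^{N-m}\int\abs{\hat h}^m d\xi$, which is exponentially small.

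The main obstacle I anticipate is exactly this tail control on $\hat h$: the $1/\sqrt{y}$ singularity of $h$ at the origin prevents $h$ itself from inheriting usable $L^p$-regularity from $f$, and one must exploit the $L^p$-hypothesis via repeated self-convolutions to bootstrap to a bounded density, hence to Fourier decay. Once this is in place, the Gnedenko machinery gives the uniform LCLT, the error $\lambda_N$ inherits (iii) directly, and all four conditions in Definition~\ref{def: g concentration on the sphere} are simultaneously verified, completing the proof.
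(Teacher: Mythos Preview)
Your approach is correct and is exactly the one the paper indicates: the paper does not give a self-contained proof of this theorem but cites \cite{CCRLV} and notes in the subsequent Remark that the argument is a local central limit theorem for the density $h$ of $V^2$, via the identity $\mathcal{Z}_N(f,\sqrt{r})=2h^{*N}(r)\big/\bigl(|\mathbb{S}^{N-1}|r^{(N-2)/2}\bigr)$, which is precisely your Step~1 reduction. Your Steps~2--3 then fill in the standard Gnedenko machinery and correctly isolate the one nontrivial point, namely that the $L^p$ hypothesis on $f$ is what forces some convolution power $h^{*m}$ to be bounded (equivalently $\widehat h\in L^q$ for some finite $q$), exactly as in \cite{CCRLV}; your flag about the missing $\Sigma^{-1}$ in the stated $g$ is also well taken and reflects a normalisation slip in the paper rather than a defect in your argument.
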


\begin{remark}  
A few things to notice:
\begin{itemize}
\item The proof of Theorem~\ref{thm: CCLLV approximation thm} is by a Local Central Limit Theorem. Indeed, one can easily show that
$$\mathcal{Z}_N\pa{f,\sqrt{r}}=\frac{2h^{\ast N}}{\pa{\abs{\mathbb{S}^{N-1}a}a  r^{\frac{N-2}{2}}}},$$
where $h$ is the density function associated to the random variable $V^2$.
\item  The notion of $g$-concentration can be extended to accommodate L\'evy 
Local Central Limit Theorems. The applications of these to chaos have been developed in \cite{CE}. The changes to 
Definition~\ref{def: g concentration on the sphere} required in this context are that the the initial factor of $(N-k)^{1/2}$ on the right in 
(\ref{eq: g concentration on the sphere}) would be replaced by $(N-k)^{1/\alpha}$ for some $\alpha$, and that
of $1/\sqrt{2\pi}$ in the conditions (ii) and (iv) would be replaced by another constant. 
\end{itemize}
\end{remark}
With the notion of $g-$concentration we can show that conditioned tensorisation of $f$, with appropriate $f$ are not only good candidates to the Kac Walk, in the sense that they are chaotic - they also give a formal proof to the intuition we had for the scaling of $H_N$ and $D_N$ by $N$.
\begin{definition}\label{def:extrachoas}
We say that a symmetric density probability $F_N\in P\pa{\kac}$ is $f-$entropically chaotic, for some $f\in P\pa{\R}$, if $F_N$ is $f-$chaotic and 
\begin{equation}\nonumber
\lim_{N\rightarrow\infty}\frac{H_N\pa{F_N}}{N}=H\pa{f|M}\ .
\end{equation}
We say that a symmetric density probability $F_N\in P\pa{\kac}$ is strongly $f-$entropically chaotic, for some $f\in P\pa{\R}$, if $F_N$ is $f-$entropically chaotic and 
\begin{equation}\label{strongent}
\lim_{N\rightarrow\infty}\frac{D_N\pa{F_N}}{N}=\frac{D_0(f)}{2}.
\end{equation}
\end{definition}
\begin{theorem}\label{thm:conditionedproperties}
Let $f$ be $g-$concentrated and  let $F_N$ be the conditioned tensorisation of $f$. Then 
\begin{enumerate}[(i)]
\item for any $1\leq k <N$
\begin{equation}\label{nice}\Pi_k\pa{F_N}(v_1,\dots,v_k)=\pa{\frac{N}{N-k}}^{1/2}\frac{g\pa{k-\sum_{i=1}^k v_i^2,N-k}+\lambda_{N-k}\pa{N-\sum_{i=1}^k v_i^2}}{(2\pi)^{-1/2}+\lambda_N\pa{N}}f^{\otimes k}(v_1,\dots,v_k)  \ .
\end{equation}
Consequently, $F_N$ is $f-$chaotic.
\item If in addition 
$$\Norm{f}_{L\log L}=\int_{\R}\abs{f(v)}\pa{1+\abs{v}^2}\pa{1+\abs{\log f(v)}}dv<\infty$$
then
\begin{equation}\label{eq: rescaled entropy convergence}
\lim_{N\rightarrow\infty}\frac{H_N\pa{F_N}}{N}=H\pa{f|M},
\end{equation}
i.e. $F_N$ is $f-$entropically chaotic.
\item If $\Norm{f}_{L \log L}<\infty$ then
\begin{equation}\label{eq: strongly entropically chaotic}
\lim_{N\rightarrow\infty}\frac{D_{N,\gamma}\pa{F_N}}{N}=\frac12 D_\gamma\pa{f},
\end{equation}
i.e. $F_N$ is strongly $f-$entropically chaotic. 
\end{enumerate}
\end{theorem}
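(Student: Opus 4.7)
The plan is to exploit the product structure $F_N=f^{\otimes N}/\mathcal{Z}_N(f,\sqrt{N})$ throughout, reducing every integral over $\kac$ to a sphere‐disintegration that exposes the normalization function $\mathcal{Z}_{N-k}$, and then to feed in the $g$--concentration asymptotics (3.7).

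For part (i), I would start from the conditional decomposition
\[
\Pi_k(F_N)(v_1,\dots,v_k)=\Pi_k^{\mathrm{unif}}(v_1,\dots,v_k)\,f^{\otimes k}(v_1,\dots,v_k)\,\frac{\mathcal{Z}_{N-k}\!\left(f,\sqrt{N-\sum_i v_i^2}\right)}{\mathcal{Z}_N(f,\sqrt{N})},
\]
which follows because, conditionally on $(v_1,\dots,v_k)$, the remaining coordinates are uniform on $\mathbb{S}^{N-k-1}(\sqrt{N-\sum v_i^2})$, and $F_N$ is a product divided by a constant. Writing $\Pi_k^{\mathrm{unif}}$ explicitly in terms of $|\mathbb{S}^{N-k-1}|,|\mathbb{S}^{N-1}|,N^{(N-2)/2}$, and $(N-\sum v_i^2)^{(N-k-2)/2}$, then substituting (3.7) for both the numerator (with $u=N-\sum v_i^2$, so $u-(N-k)=k-\sum v_i^2$) and the denominator ($u=N$, so $u-N=0$ and the leading term is $(2\pi)^{-1/2}$), the $(N-\sum v_i^2)^{(N-k-2)/2}$, $|\mathbb{S}^{N-k-1}|$ and $N^{(N-2)/2}$ factors cancel and one is left with exactly (3.11). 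Chaoticity follows immediately: for each fixed $(v_1,\dots,v_k)$ condition (iv) of $g$-concentration gives $g(k-\sum v_i^2,N-k)\to(2\pi)^{-1/2}$, while (iii) kills both $\lambda$ remainders, so the prefactor tends to $1$; vague convergence of $\Pi_k(F_N)$ to $f^{\otimes k}$ is then dominated convergence using the uniform bound on $g$ and $\lambda_N$ provided by (i) and (iii).

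For part (ii), the factorization gives $\log F_N=\sum_i\log f(v_i)-\log\mathcal{Z}_N(f,\sqrt{N})$, so permutation symmetry yields
\[
\frac{H_N(F_N)}{N}=\int_{\R}\Pi_1(F_N)(v)\log f(v)\,dv-\frac{1}{N}\log\mathcal{Z}_N(f,\sqrt{N}).
\]
Because of the $L\log L$ hypothesis and the uniform boundedness of the prefactor in (3.11) (with $k=1$), the first term converges to $\int f\log f\,dv$ by dominated convergence. For the second, apply (3.7) at $u=N$ and use Stirling for $|\mathbb{S}^{N-1}|=2\pi^{N/2}/\Gamma(N/2)$; a direct calculation gives
\[
-\frac{1}{N}\log\mathcal{Z}_N(f,\sqrt{N})\;\longrightarrow\;\frac{1}{2}\log(2\pi)+\frac{1}{2},
\]
and, since $M_2(f)=1$, this is exactly $-\int f\log M\,dv$. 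Adding the two contributions produces $H(f|M)$.

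For part (iii), permutation symmetry reduces the entropy production to one typical pair,
\[
\frac{D_{N,\gamma}(F_N)}{N}=\frac{1}{4\pi}\int_{\kac}\int_0^{2\pi}(1+v_1^2+v_2^2)^\gamma\,\psi\bigl(F_N,\,F_N\circ R_{1,2,\theta}\bigr)\,d\sigma^N\,d\theta.
\]
Under $R_{1,2,\theta}$ only $f(v_1)f(v_2)$ changes and the factors $\prod_{i\ge 3}f(v_i)/\mathcal{Z}_N$ and their logs cancel inside $\psi$, so one can pull these out and integrate $(v_3,\dots,v_N)$ against uniform measure on $\mathbb{S}^{N-3}(\sqrt{N-v_1^2-v_2^2})$; the resulting expression is an integral over $\R^2\times[0,2\pi]$ whose kernel is exactly $(1+v_1^2+v_2^2)^\gamma\,\psi(f(v_1)f(v_2),f(v_1(\theta))f(v_2(\theta)))$ multiplied by the same $\Pi_2$--prefactor appearing in (3.11) with $k=2$. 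By part (i) this prefactor converges pointwise to $1$ and is bounded uniformly, so dominated convergence gives
\[
\lim_{N\to\infty}\frac{D_{N,\gamma}(F_N)}{N}=\frac{1}{4\pi}\int_{\R^2}\!\!\int_0^{2\pi}(1+v_1^2+v_2^2)^\gamma\,\psi\bigl(f(v_1)f(v_2),f(v_1(\theta))f(v_2(\theta))\bigr)\,dv_1 dv_2\,d\theta,
\]
and a standard double symmetrization ($v_1\leftrightarrow v_2$, and $(v_1,v_2,\theta)\leftrightarrow(v_1(\theta),v_2(\theta),-\theta)$) identifies the right side with $\tfrac12 D_\gamma(f)$.

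The main obstacle is the dominated convergence in part (iii): $\psi(a,b)=(a-b)\log(a/b)$ can be large when $f$ is small, and one must produce an integrable upper bound valid uniformly in $N$. The $L\log L$ control on $f(1+|v|^2)(1+|\log f|)$ together with the uniform boundedness of $g$ and $\lambda_N$ from the $g$--concentration definition supplies a dominator of the form $C\,(1+v_1^2+v_2^2)^\gamma f(v_1)f(v_2)(1+|\log f(v_1)|+|\log f(v_2)|+|\log f(v_1(\theta))|+|\log f(v_2(\theta))|)$; checking its integrability is the only step where care is truly needed, while the corresponding bound in part (ii) is easier because it involves only a single marginal.
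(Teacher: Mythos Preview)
Your proposal is correct and follows essentially the same route as the paper: sphere disintegration to expose $\mathcal{Z}_{N-k}$, substitution of the $g$--concentration asymptotics to obtain the prefactor in (3.11), and then a pointwise--plus--domination argument to pass to the limit. The paper's own proof is in fact more of a sketch---it refers parts (i) and (ii) to the literature and only computes (iii) in the case $\gamma=1$, finishing with a one-line appeal to ``uniform convergence on compacts''---so your version, with the explicit Stirling computation in (ii) and the careful discussion of the $L\log L$ dominator in (iii), is more detailed but not different in substance.
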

\begin{proof}
These properties are mainly known under the setting of Theorem \ref{thm: CCLLV approximation thm} (or an appropriate L\'evy Central Limit Theorem), see for instance \cite{CCRLV,CE,Einav1}. The fact that the exact known concentration function $g$ in this studies is replaced by an unknown one from Definition \ref{def: g concentration on the sphere} plays no role. The important detail here is {\em{the existence of a measure of concentration}}. We will show $(iii)$ to emphasise this point. For simplicity we will only consider the case $\gamma=1$.\\
A simple geometric identity on $\kac$ shows that
\begin{equation}\label{eq: Fubini on the sphere}
\begin{gathered}
\int_{\mathbb{S}^{N-1}(r)}\phi_k F_Nd\sigma^N_r=\frac{\abs{\mathbb{S}^{N-k-1}}}{\abs{\mathbb{S}^{N-1}}}\frac{1}{r^{N-2}}\int \phi_k(v_1,\dots,v_k)\pa{r^2-\sum_{i=1}^k v_i^2}_{+}^{\frac{N-k-2}{2}}\\
\pa{\int_{\mathbb{S}^{N-k-1}\pa{\sqrt{r^2-\sum_{i=1}^k v_i^2}}}F_Nd\sigma^{N-k}_{\sqrt{r^2-\sum_{i=1}^k v_i^2}}}dv_1\dots dv_k.
\end{gathered}
\end{equation}
Using the properties of the $g-$concentration and (\ref{eq: Fubini on the sphere}) we have that
\begin{equation}\nonumber
\begin{gathered}
\frac{D_{N,1}\pa{F_N}}{N}=\frac{1}{4\pi}\frac{1}{\pa{\begin{tabular}{c} $N$ \\ $2$ \end{tabular}}} \sum_{i<j} \int_{v_i^2+v_j^2 \leq N}
\frac{\abs{\mathbb{S}^{N-3}}\pa{N-v_i^2-v_j^2}^{\frac{N-4}{2}}\mathcal{Z}_{N-2}\pa{f,\sqrt{N-v_i^2-v_j^2}}}{\abs{\mathbb{S}^{N-1}}N^{\frac{N-2}{2}}\mathcal{Z}_N\pa{f,\sqrt{N}}}\\
\pa{1+v_i^2+v_j^2}\pa{f\pa{v_i}f\pa{v_j}-f\pa{v_i(\theta)}f\pa{v_j(\theta)}}\log\pa{\frac{f\pa{v_i}f\pa{v_j}}{f\pa{v_i(\theta)}f\pa{v_j(\theta)}}}dv_idv_jd\theta.
\end{gathered}
\end{equation}
\begin{equation}\nonumber
\begin{gathered}
=\frac{1}{4\pi} \pa{\frac{N}{N-2}}^{1/2}\int_{v_1^2+v_2^2 \leq N}\frac{g\pa{2-v_1^2-v_2^2,N-2}+\lambda_{N-2}\pa{N-v_1^2-v_2^2}}{(2\pi)^{-1/2}+\lambda_N\pa{N}}\\
\pa{1+v_1^2+v_2^2}\pa{f\pa{v_1}f\pa{v_2}-f\pa{v_1(\theta)}f\pa{v_2(\theta)}}\log\pa{\frac{f\pa{v_1}f\pa{v_2}}{f\pa{v_1(\theta)}f\pa{v_2(\theta)}}}dv_1dv_2d\theta.
\end{gathered}
\end{equation}
Using the uniform convergence on compacts in the $g$-concentration definition yields the desired result.
\end{proof}

\begin{remark}\label{rem:explicit_recaled_quantities}
One can notice that an improvement of the above Theorem can be obtained when one uses property (iv) in the definition of $g$ concentration. Indeed, the uniform convergence allows us to find constants $C_i>0$, $i=1,2,3,4$ such that 
$$C_1 H(f|M) \leq \frac{H_N(F_N)}{N} \leq C_2 H(f|M),$$
and
$$C_3 D_\gamma(f|M) \leq \frac{D_{N,\gamma}(F_N)}{N} \leq C_4 D_\gamma(f|M).$$
Moreover, $\br{C_i}_{i=1,\dots,4}$ get closer to $1$ as $N$ increases \emph{in an explicit way}.
\end{remark}

Now that we have established how natural conditioned tensorisation are, we will see that they also give rise to simple example for families that are log-scalable and have the log-power property of order $\beta$.
\begin{theorem}\label{thm:conditoinedlogscale}
Let $f\in P(\R)$ with unit second moment be $g-$concentrated. Then, if there exist $a_1,a_2,C_1,C_2>0$ such that
\begin{equation}\label{uplow}
C_1 e^{-a_1 v^2} \leq f(v) \leq C_2 e^{+a_2v^2}\,
\end{equation}
the family of conditioned tensorisation of $f$ is log-scalable with
$$C_F= \max\left\{\abs{\log C_1},\abs{\log C_2}\right\}+ \max\left\{a_1,a_2\right\}$$
$$+ \sup_{N\in\N}\abs{\frac{\log\pa{2\pa{2\pi}^{-1/2}+2\lambda_N(N)}}{N}-\frac{\log N}{2 N}-\frac{\log\pa{\abs{\mathbb{S}^{N-1}}N^{\frac{N-2}{2}}}}{N}}$$
\end{theorem}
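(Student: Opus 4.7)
The plan is to unpack the definition of the conditioned tensorisation and show that $|\log F_N(\bm{v})|/N$ can be split into a ``single-particle'' piece controlled by the pointwise bounds \eqref{uplow} and a ``normalization'' piece controlled by the $g$-concentration hypothesis; the sum of their worst-case contributions will exactly reproduce the formula for $C_F$.

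First, I would write
\begin{equation*}
\log F_N(\bm{v}) \;=\; \sum_{i=1}^N \log f(v_i) \;-\; \log \mathcal{Z}_N(f,\sqrt{N}) .
\end{equation*}
The hypothesis \eqref{uplow} gives $\log C_1 - a_1 v^2 \le \log f(v) \le \log C_2 + a_2 v^2$. Summing this over $i$ and using the defining constraint $\sum_i v_i^2 = N$ on $\kac$, one gets
\begin{equation*}
\Bigl|\sum_{i=1}^N \log f(v_i)\Bigr| \;\le\; N\,\max\{|\log C_1|,|\log C_2|\} \;+\; N\,\max\{a_1,a_2\} ,
\end{equation*}
which accounts for the first two terms in the expression for $C_F$.

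Next I would tackle the normalization. Applying Definition~\ref{def: g concentration on the sphere} with $k=0$ and $u=N$, and using $g(0,N)=(2\pi)^{-1/2}$ from condition (ii), yields the closed form
\begin{equation*}
\mathcal{Z}_N(f,\sqrt{N}) \;=\; \frac{2\bigl((2\pi)^{-1/2}+\lambda_N(N)\bigr)}{N^{1/2}\,|\mathbb{S}^{N-1}|\,N^{(N-2)/2}} .
\end{equation*}
Taking logarithms and dividing by $N$ produces exactly the three-term combination appearing in the last line of the displayed $C_F$. Taking the supremum over $N$ in absolute value completes the bound on $|\log \mathcal{Z}_N(f,\sqrt{N})|/N$.

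The only genuinely non-trivial point is verifying that this last supremum is finite, i.e.\ that $C_F < \infty$. Two ingredients take care of this: (a) condition (iii) of the $g$-concentration definition guarantees $\lambda_N(N)$ is bounded uniformly in $N$, so $\log(2(2\pi)^{-1/2}+2\lambda_N(N))/N \to 0$; (b) Stirling's formula applied to $|\mathbb{S}^{N-1}| = 2\pi^{N/2}/\Gamma(N/2)$ shows that
\begin{equation*}
\frac{\log\bigl(|\mathbb{S}^{N-1}|\,N^{(N-2)/2}\bigr)}{N} \;\longrightarrow\; \tfrac{1}{2}\log(2\pi e) ,
\end{equation*}
so this term is bounded as well. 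Combining the two estimates gives $|\log F_N(\bm{v})|\le C_F N$ uniformly on $\kac$, which is exactly the log-scalability condition \eqref{eq:log scalable}. I do not expect any real obstacle beyond assembling these asymptotic facts cleanly; the argument is essentially a bookkeeping exercise once the $g$-concentration identity for $\mathcal{Z}_N$ is used.
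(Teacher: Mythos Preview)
Your proposal is correct and follows essentially the same route as the paper: decompose $\log F_N$ into the tensor-product part and the normalization part, bound the first via \eqref{uplow} together with the energy constraint $\sum_i v_i^2 = N$, and handle the second via the $g$-concentration formula for $\mathcal{Z}_N(f,\sqrt{N})$ plus Stirling asymptotics for $|\mathbb{S}^{N-1}|N^{(N-2)/2}$. The only cosmetic difference is that the paper bounds $\sum_i|\log f(v_i)|$ pointwise before summing, while you sum first and then take the absolute value; both arrive at the same constant.
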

\begin{remark}
Note the sign of the exponent in the right hand side of \eqref{uplow}. The upper bound on $f$ is a very non-restrictive condition. 
\end{remark}
\begin{proof}
As $f$ is $g-$concentrated we have that
$$\frac{\log \pa{\mathcal{Z}_N\pa{f,\sqrt{N}}}}{N}=\frac{\log\pa{2\pa{2\pi}^{-1/2}+2\lambda_N(N)}}{N}-\frac{\log N}{2 N}-\frac{\log\pa{\abs{\mathbb{S}^{N-1}}N^{\frac{N-2}{2}}}}{N}.$$
As the last term converges to $-\pa{1+\log 2\pi}/2$ we conclude that
$$\sup_{N\in \mathbb{N}}\abs{\frac{\log \pa{\mathcal{Z}_N\pa{f,\sqrt{N}}}}{N}} \leq C$$
for some explicit constant $C>0$. Additionally we have that
$$\abs{\frac{\log\pa{f^{\otimes N}(\bm{v})}}{N}} \leq  \frac{\sum_{i=1}^N \abs{\log f(v_i)}}{N} \leq 
\max\left\{\abs{\log C_1},\abs{\log C_2}\right\}+ \max\left\{a_1,a_2\right\} \ .$$
As
$$\frac{\abs{\log F_N\pa{\bm{v}}}}{N} \leq \abs{\frac{\log\pa{f^{\otimes N}(\bm{v})}}{N}} + \abs{\frac{\log \pa{\mathcal{Z}_N\pa{f,\sqrt{N}}}}{N}}$$
the result follows.
\end{proof}

\begin{theorem}\label{thm:conditionedlogpower}
Let $f\in P\pa{\R}$ be $g-$concentrated. Assume in addition that $\|f\|_\infty <\infty$ and that
$$f(v) \geq e^{-\Phi(v)}$$
for a positive $\Phi(v)$. Moreover, assume that 
$$M_{\Phi,\beta} = \int_{\R}\Phi(v)^{1+\beta}f(v)dv < \infty$$
and
$$M_{avg,\Phi,\beta} = \int_{\R^2}\pa{\int_{0}^{2\pi}\Phi(v_1(\theta))^{1+\beta}d\theta}f(v_1)f(v_2)dv_1 dv_2 < \infty.$$
Then the conditioned tensorisation of $f$ has the log-power property of order $\beta$ for $N\geq 3$. Moreover, the constant $C$ in the definition can be evaluated by
\begin{equation}\label{eq:C_of_log_power}
C= \left[ 2^{1+2\beta}\sqrt{3}\frac{\norm{g}_{\infty}+\sup \abs{\lambda_{N-1}}}{\pa{2\pi}^{-1/2}-\sup \abs{\lambda_N}} \pa{2\pa{C_{\epsilon}\norm{f}^{\epsilon}_{\infty}}^{1+\beta}+M_{\Phi,\beta}+M_{avg,\Phi,\beta}}\right]^{\frac{1}{1+\beta}},
\end{equation}
where $C_\epsilon=\sup_{x\geq 1}\frac{\log x}{x^{\epsilon}}$.
\end{theorem}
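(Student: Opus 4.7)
The plan is to directly estimate the quantity
\[
\Xi_N(\theta):=\int_{\kac}\psi_\beta\bigl(F_N,F_N\circ R_{1,2,\theta}\bigr)\,d\sigma^N
=\int_{\kac}\bigl|F_N-F_N\circ R_{1,2,\theta}\bigr|\,\bigl|L_\theta(\bm v)\bigr|^{1+\beta}d\sigma^N
\]
where $L_\theta(\bm v)=\log f(v_1)+\log f(v_2)-\log f(v_1(\theta))-\log f(v_2(\theta))$ (using the explicit form $F_N=f^{\otimes N}/\mathcal{Z}_N(f,\sqrt N)$, the factors $\log f(v_i)$ for $i\geq 3$ and the normalization cancel in $L_\theta$). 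The first move is the crude but harmless bound $|a-b|\le a+b$, which yields $\Xi_N(\theta)\le\int F_N|L_\theta|^{1+\beta}+\int F_N\circ R_{1,2,\theta}|L_\theta|^{1+\beta}$. Since $R_{1,2,\theta}$ is measure-preserving on $\kac$ and changing variables turns the second term into the first one with $\theta\mapsto-\theta$, averaging in $\theta$ over $[0,2\pi]$ collapses the two terms and gives an overall factor of $2$.

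Next I would pass from the sphere to $\R^2$. Using the Fubini identity \eqref{eq: Fubini on the sphere} with $k=2$, the remaining integral of $F_N|L_\theta|^{1+\beta}$ over $\kac$ equals the integral of $|L_\theta|^{1+\beta}$ against the two-particle marginal $\Pi_2(F_N)$. By part (i) of Theorem~\ref{thm:conditionedproperties}, which is exactly where the $g$-concentration hypothesis enters, this marginal factors as $(N/(N-2))^{1/2}\cdot\frac{g(2-v_1^2-v_2^2,N-2)+\lambda_{N-2}(\cdot)}{(2\pi)^{-1/2}+\lambda_N(N)}\cdot f(v_1)f(v_2)$, and the prefactor is uniformly bounded for $N\ge 3$ by
\[
\sqrt{3}\,\frac{\|g\|_\infty+\sup|\lambda_{N-2}|}{(2\pi)^{-1/2}-\sup|\lambda_N|}.
\]
This reduces the problem to bounding $\frac1{2\pi}\int_0^{2\pi}\!\int_{\R^2}|L_\theta|^{1+\beta}f(v_1)f(v_2)\,dv_1dv_2\,d\theta$, which no longer involves $N$.

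To bound this last quantity I would apply the standard convexity inequality $|a+b+c+d|^{1+\beta}\le 4^\beta(|a|^{1+\beta}+|b|^{1+\beta}+|c|^{1+\beta}+|d|^{1+\beta})$, splitting $|L_\theta|^{1+\beta}$ into four pieces. The two unrotated pieces reduce (by Fubini and $\int f=1$) to $\int_\R|\log f|^{1+\beta}f$, while the two rotated pieces are symmetric under $v_1\leftrightarrow v_2$ and produce the angle-averaged integral appearing in $M_{avg,\Phi,\beta}$. For each such integral the integrand $|\log f|^{1+\beta}$ is split on $\{f\ge1\}$ and $\{f<1\}$: on the former I use $\log x\le C_\epsilon x^\epsilon$ to obtain the bound $(C_\epsilon\|f\|_\infty^\epsilon)^{1+\beta}$ after integrating $f\,dv=1$; on the latter the hypothesis $f\ge e^{-\Phi}$ gives $|\log f|\le\Phi$, producing the terms $M_{\Phi,\beta}$ and $M_{avg,\Phi,\beta}$, both finite by assumption. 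Tracking the constants, the factor of $2$ from the first step combines with $4^\beta$ to give $2^{1+2\beta}$, and assembling the pieces recovers \eqref{eq:C_of_log_power}.

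The principal technical difficulty is bookkeeping: one must carefully account for the separate contributions of the two unrotated and two rotated logarithmic terms, ensure the $\{f\ge 1\}$/$\{f<1\}$ split is applied \emph{inside} the angle average so that the right quantities $M_{\Phi,\beta}$ and $M_{avg,\Phi,\beta}$ appear (rather than a single combined moment of $\Phi$), and verify that the support constraint $v_1^2+v_2^2\le N$ in the Fubini identity can be dropped harmlessly because all the integrands being estimated are nonnegative. Once these pieces are aligned, no further ingredients beyond $\|f\|_\infty<\infty$, the lower bound $f\ge e^{-\Phi}$, the two moment conditions on $\Phi$, and the explicit $g$-concentration of Definition~\ref{def: g concentration on the sphere} are needed.
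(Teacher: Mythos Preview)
Your proposal is correct and follows essentially the same route as the paper's proof: both exploit that $\log(F_N/F_N\circ R_{1,2,\theta})$ depends only on $(v_1,v_2,\theta)$, use the $g$-concentration/Fubini formula to reduce to an integral in $\R^2$ against $f(v_1)f(v_2)$ with the prefactor $\sqrt{N/(N-2)}\cdot\frac{\|g\|_\infty+\sup|\lambda_{N-2}|}{(2\pi)^{-1/2}-\sup|\lambda_N|}$, and then split $|L_\theta|^{1+\beta}$ by convexity and handle each $|\log f|^{1+\beta}$ via the dichotomy $f\ge 1$ (bounded using $\log x\le C_\epsilon x^\epsilon$ and $\|f\|_\infty$) versus $f<1$ (bounded using $f\ge e^{-\Phi}$). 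The only cosmetic difference is that the paper first passes to $\R^2$ and then applies $|a-b|\le a+b$ to $|f(v_1)f(v_2)-f(v_1(\theta))f(v_2(\theta))|$, whereas you apply $|a-b|\le a+b$ on the sphere before using the marginal formula; the two orderings are equivalent.
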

\begin{proof}
We start by noticing that for $F_N=\frac{f^{\otimes N}}{\mathcal{Z}_N\pa{f,\sqrt{N}}}$ we have that
\begin{equation}\label{eq:reason}
\log\pa{\frac{F_N}{F_N\circ R_{i,j,\theta}}}= \log\pa{\frac{f(v_1)f(v_1)}{{f(v_1(\theta))f(v_2(\theta))}}}.
\end{equation}
As the above only depends on $v_1,v_2$ and $\theta$ we find that, like the proof of Theorem \ref{thm:conditionedproperties} that for any $N\geq 3$
$$\frac{1}{2\pi}\int_{0}^{2\pi}\int_{\kac} \abs{\log \pa{F_N} - \log \pa{F_N\circ R_{1,2,\theta}}}^{1+\alpha}\abs{F_N-F_N\circ R_{1,2,\theta}}d\sigma^N d\theta$$
$$\leq \frac{1}{2\pi}\pa{\frac{N}{N-2}}^{1/2}\frac{\norm{g}_{\infty}+\sup \abs{\lambda_{N-2}}}{\pa{2\pi}^{-1/2}-\sup \abs{\lambda_N}}$$
$$\int_{0}^{2\pi}\int_{v_1^2+v_2^2 \leq N}\abs{\log\pa{f(v_1)f(v_2)}-\log\pa{f(v_1(\theta))f(v_2(\theta))}}^{1+\beta}\abs{f(v_1)f(v_2)-f(v_1(\theta))f(v_2(\theta))}dv_1 dv_2 d\theta$$
$$\leq C_g \Bigg (\int_{0}^{2\pi}\int_{v_1^2+v_2^2 \leq N}\abs{\log\pa{f(v_1)}}^{1+\beta}f(v_2)f(v_1)dv_1 dv_2 d\theta$$
$$+ \int_{0}^{2\pi}\int_{v_1^2+v_2^2 \leq N}\abs{\log\pa{f(v_1(\theta))}}^{1+\beta}f(v_1)f(v_2)dv_1 dv_2 d\theta\Bigg)$$
where $C_g = \frac{2^{1+2\beta}\sqrt{3}}{2\pi}\frac{\norm{g}_{\infty}+\sup \abs{\lambda_{N-1}}}{\pa{2\pi}^{-1/2}-\sup \abs{\lambda_N}}$ and we have used the fact that
$$\pa{a+b}^{q} \leq 2^{q-1}\pa{a^q+b^q}$$
for any $a,b>0$ and $q>1$.\\
Next, we notice that for any non-negative function $h$ with finite $L^\infty$ norm, and for any $\epsilon>0$ we have that
$$\abs{\log h(v)} \leq  C_{\epsilon}h(v)^{\epsilon} \leq C_\epsilon \norm{h}_{\infty}^{\epsilon} $$
when $h(v) \geq 1$. Thus:
\begin{eqnarray}
\int_{0}^{2\pi}\int_{v_1^2+v_2^2 \leq N}\abs{\log f(v_1) }^{1+\beta} f(v_1)f(v_2)dv_1 dv_2 d\theta  &\leq&
 2\pi\pa{C_{\epsilon}^{1+\beta}}\norm{f}_{\infty}^{\epsilon(1+\beta)}
+ 2\pi \int_{\R}\phi(v)^{1+\beta}f(v)dv\nonumber\\
&\leq& 2\pi \left[ \pa{C_{\epsilon}^{1+\beta}}\norm{f}_{\infty}^{\epsilon(1+\beta)}+M_{\Phi,\beta} \right] \ .\nonumber
\end{eqnarray}
Similarly
$$\int_{0}^{2\pi}\int_{v_1^2+v_2^2 \leq N}\abs{\log f(v_1(\theta)) }^{1+\beta} f(v_1)f(v_2)dv_1 dv_2 d\theta  \leq 2\pi 
\left[ \pa{C_{\epsilon}^{1+\beta}}\norm{f}_{\infty}^{\epsilon(1+\beta)}+M_{avg,\Phi,\beta} \right],$$
completing the proof.
\end{proof}
\begin{remark}\label{rem:importanceoflogpowerdefintoin}
We would like to emphasise that \eqref{eq:reason} is {\em{exactly}} what motivated the definition of the log-power property and why Theorem \ref{thm:mainlofpower} works. The problem with the Kac Walk lies mainly with the strong dependency in the dimension $N$. The log-power property states, in some sense that $\abs{\log \pa{\frac{F_N}{F_N \circ R_{1,2,\theta}}}}^{1+\beta}$ has {\em{lost most f its variables}} and so the remaining $\abs{F_N-F_N\circ R_{1,2,\theta}}$ poses no problems due to the additional integration over $\kac$.
\end{remark}

The last thing we will do in this section, is to set the final piece that will be needed to use the mean field limit approach to move from Theorem \ref{thm:mainlofpower} to Theorem \ref{thm:almost_cerc_kac_boltz}. The main idea is to somehow have both the log-power property, and be strongly entropically chaotic.

\begin{definition}\label{def:log-power-chaoticity}
We say that a family of probability densities $\br{F_N}_{N\in\mathbb{N}}\in P\pa{\kac}$ are $f-$log-power-chaotic of order $\beta>0$ if $F_N$ is $f-$strongly entropic and has the log-power of order $\beta$ property.
\end{definition}

\begin{prop}\label{prop:logchaoticexample}
Let $f\in P\pa{\R}$ be $g-$concentrated. Assume in addition that $I(f)<\infty$ and that there exists $k\geq 2$, $\beta>0$ such that $M_{k(1+\beta)}(f) <\infty$ and
$$f(v) \geq e^{-a\abs{v}^k}.$$
Then the conditioned tensorisation of $f$ is log-power-chaotic of order $\beta$ when $N\geq 3$. Moreover, the constant $C$ in the definition can be evaluated by
\begin{equation}\label{eq:logchaoticexample}
C= \left[ 2^{1+2\beta}\sqrt{3}\frac{\norm{g}_{\infty}+\sup \abs{\lambda_{N-1}}}{\pa{2\pi}^{-1/2}-\sup \abs{\lambda_N}} \pa{2\pa{C_{\epsilon}I(f)^{\frac{\epsilon}{2}}}^{1+\beta}+\pa{1+C_{k,\beta}}M_{(k(1+\beta)}}\right]^{\frac{1}{1+\beta}},
\end{equation}
where $C_\epsilon=\sup_{x\geq 1}\frac{\log x}{x^{\epsilon}}$, $C_{k,\beta}= 2^{k(1+\beta)+1}\pa{\int_{0}^{2\pi} \abs{\cos\theta}^{k(1+\beta)}d\theta}$.
\end{prop}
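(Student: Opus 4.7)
The plan is to verify directly the two conditions in Definition \ref{def:log-power-chaoticity}: that the conditioned tensorisation $F_N$ of $f$ has the log-power property of order $\beta$ (by appealing to Theorem \ref{thm:conditionedlogpower}) and that $F_N$ is strongly $f$-entropically chaotic (by appealing to Theorem \ref{thm:conditionedproperties}(iii)). Deriving the explicit constant \eqref{eq:logchaoticexample} then amounts to substituting the bounds obtained in the verification of these hypotheses into \eqref{eq:C_of_log_power}.

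The first step is to bound $\|f\|_\infty$ in terms of $I(f)$. Since $\sqrt{f} \in L^2(\R)$ with unit $L^2$ norm and $\|(\sqrt{f})'\|_2^2 = I(f)/4 < \infty$, the elementary one-dimensional Sobolev inequality $\|u\|_\infty^2 \leq 2\|u\|_2\|u'\|_2$, valid for $u \in H^1(\R)$, yields $\|f\|_\infty = \|\sqrt{f}\|_\infty^2 \leq \sqrt{I(f)}$. In particular $\|f\|_\infty^\epsilon \leq I(f)^{\epsilon/2}$, which is exactly the substitution required to move from \eqref{eq:C_of_log_power} to \eqref{eq:logchaoticexample}.

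To apply Theorem \ref{thm:conditionedlogpower} I set $\Phi(v) = a|v|^k$, for which the pointwise bound $f(v) \geq e^{-\Phi(v)}$ is exactly the standing hypothesis, and $\Phi$ is positive. The moment $M_{\Phi,\beta} = a^{1+\beta} M_{k(1+\beta)}(f)$ is finite by hypothesis. For the averaged moment I would use the convexity estimate
\[
|v_1\cos\theta + v_2\sin\theta|^{k(1+\beta)} \leq 2^{k(1+\beta)-1}\left(|\cos\theta|^{k(1+\beta)}|v_1|^{k(1+\beta)} + |\sin\theta|^{k(1+\beta)}|v_2|^{k(1+\beta)}\right),
\]
integrate first in $\theta$ using the identity $\int_0^{2\pi}|\sin\theta|^q d\theta = \int_0^{2\pi}|\cos\theta|^q d\theta$ and then in $(v_1,v_2)$ against the product $f(v_1)f(v_2)$, exploiting $\int f = 1$ and $M_{k(1+\beta)}(f) < \infty$. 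The resulting bound takes the form $M_{avg,\Phi,\beta} \leq a^{1+\beta} C_{k,\beta} M_{k(1+\beta)}(f)$ with $C_{k,\beta}$ as in the statement, so the two moment contributions combine into the factor $(1+C_{k,\beta})M_{k(1+\beta)}$ of \eqref{eq:logchaoticexample}. Theorem \ref{thm:conditionedlogpower} then yields the log-power property of order $\beta$ with precisely the constant claimed.

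It remains to verify the strong $f$-entropic chaoticity, which by Theorem \ref{thm:conditionedproperties}(iii) reduces to showing $\|f\|_{L\log L}<\infty$. The polynomial weight $\int f(1+v^2)dv$ is controlled by $M_2(f)=1$. For the logarithmic weight I would split according to the sign of $\log f$: on $\{f\leq 1\}$ the lower bound gives $|\log f|\leq a|v|^k$, and the resulting integral $\int f(1+v^2)|v|^k dv$ is controlled through the moments $M_k(f)$ and $M_{k+2}(f)$, both of which lie in the range dominated by $M_2(f)$ and $M_{k(1+\beta)}(f)$ (via the trivial interpolation $|v|^r \leq 1 + |v|^s$ for $r\leq s$); on $\{f>1\}$ one has $|\log f|\leq \log\|f\|_\infty$, finite by the first step. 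No new analytic ingredient beyond the Sobolev estimate and standard moment manipulations should be required; the main obstacle I anticipate is purely bookkeeping — collecting the prefactors $2^{1+2\beta}\sqrt 3$, the $g$-concentration quantities $\|g\|_\infty + \sup|\lambda_{N-1}|$ and $(2\pi)^{-1/2}-\sup|\lambda_N|$, and the moment/Fisher-information substitutions above so that the final constant reproduces \eqref{eq:logchaoticexample} verbatim.
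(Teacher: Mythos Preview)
Your proposal is correct and follows essentially the same route as the paper: invoke Theorem~\ref{thm:conditionedlogpower} with $\Phi(v)=a|v|^k$ for the log-power property and Theorem~\ref{thm:conditionedproperties} for strong entropic chaoticity, using $\|f\|_\infty\leq\sqrt{I(f)}$ to pass from \eqref{eq:C_of_log_power} to \eqref{eq:logchaoticexample}. You are in fact more thorough than the paper in two places: you explicitly verify $\|f\|_{L\log L}<\infty$ (which the paper leaves implicit), and you correctly track the factor $a^{1+\beta}$ arising from $\Phi(v)=a|v|^k$, whereas the paper's proof silently sets $\phi(v)=|v|^k$ and the stated constant \eqref{eq:logchaoticexample} omits the $a$-dependence---this is a minor inconsistency in the paper rather than a flaw in your argument.
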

\begin{proof}
This follows immediately from Theorems \ref{thm:conditionedproperties} and \ref{thm:conditionedlogpower}. In the notations of the latter we have that $\phi(v)=\abs{v}^k$ and hence
$M_{\phi,\beta}=M_{k(1+\beta)}(f)$,
and it is easy to show that
$$M_{avg,k,\beta} \leq 2^{k(1+\beta)+1}\pa{\int_{0}^{2\pi} \abs{\cos\theta}^{k(1+\beta)}d\theta}M_{k(1+\beta)}.$$
Also, as $\norm{f}_{\infty} \leq \sqrt{I(f)}$, we obtain the desired result from \eqref{eq:C_of_log_power}.
\end{proof}

\section{The Validation of Kac's Program}\label{sec:vision}
Gathering tools developed in  Section \ref{sec:functionalineq} and \ref{sec:g concentration}, we will now prove Theorem \ref{thm:almost_cerc_kac_boltz}.
\begin{proof}[Proof of Theorem \ref{thm:almost_cerc_kac_boltz}]  As noted above, $\norm{f}_{\infty} \leq \sqrt{I(f)}$, 
and then since $L^p \subset L^1\cap L^\infty$ for any $p\geq 1$, the bound on $I(f)$ implies a bound on all $L^p$ norms. 
Using Theorems \ref{thm:mainlofpower}, \ref{thm: CCLLV approximation thm},
the bound on $I(f)$ and Proposition \ref{prop:logchaoticexample}, we conclude the inequality \eqref{eq:almost_cerc_kac_boltz}. Since we are taking the limit of $N$ to infinity (and this can use convergence of moments for large $N$ and the disappearance of $\lambda_N$), we can use the constant
\begin{equation}\nonumber
\mathcal{C} = \frac{k\beta-(1+\beta)}{(1+\beta)(1-\gamma)}\pa{\frac{(1+\beta)(1-\gamma)}{k\beta-(1+\beta)}}^{\frac{k\beta-\gamma(1+\beta)}{k\beta-(1+\beta)}}
\frac{2^{\frac{1-\gamma}{k\beta-(1+\beta)}}}{3^{\frac{2k\beta-k\beta\gamma-\gamma(1+\beta)}{k\beta-(1+\beta)}}}
\frac{1}{C_f^{\frac{(1+\beta)(1-\gamma)}{k\beta-(1+\beta)}}\pa{1+2M_{2k}(f)}^{\frac{\beta(1-\gamma)}{k\beta-(1+\beta)}}}\ .
\end{equation}
where $C_f=\left[ 2^{1+2\beta} \sqrt{3} \pa{2\pa{C_{\epsilon}I(f)^{\frac{\epsilon}{2}}}^{1+\beta}+\pa{1+C_{k,\beta}}M_{(k(1+\beta)}(f)}\right]^{\frac{1}{1+\beta}}$ with $\epsilon>0$ arbitrary and $C_\epsilon=\sup_{x\geq 1}\frac{\log x}{x^{\epsilon}}$, $C_{k,\beta}= 2^{k(1+\beta)+1}\pa{\int_{0}^{2\pi} \abs{\cos\theta}^{k(1+\beta)}d\theta}$. A simpler argument shows \eqref{eq:almost_cerc_kac_boltz_exp}.\\
The second statement of the theorem follows immediately from the fact that $I(f)$ is monotone decreasing along solutions of the Kac-Boltzmann equation \cite{McK} and all moments of the solution to the Kac-Boltzmann equation that are finite initially  remain uniformly bounded in time \cite{Des}. This allows us to replace $\mathcal{C}(f(\cdot,t))$ above, with a uniform constant that only depends on the initial datum.
\end{proof}
Before we move to our final result, we would like to emphasise that this connection between inequalities for the Kac Walk and the Kac-Boltzmann equation is not a coincidence. Conditioned tensorisation of a given function $f$ allows us to move back and forth between the equations - the result of which is Theorem \ref{thm:almost_cerc_kac_boltz}. This is expressed in the following Theorem, stating that the optimality of the 'almost' Cercignani's conjecture is equivalent (in some sense) in these two settings.

\begin{theorem}\label{thm:alost cerc on sphere is optimal}
Assume that $f\in P\pa{\R}$ is $g-$concentrated and satisfies \eqref{Cercon2} for some $\epsilon>0$. Then the conditioned tensorisation of $f$ satisfies
$$\frac{D_{N,\gamma}(F_N)}{N} \geq C_1 \pa{\frac{H_N(F_N)}{N}}^{1+\epsilon}, $$
for an appropriate constant $C_1$. In particular, if \eqref{Cercon2} is optimal, then so is its Kac's Walk counterpart. 
\end{theorem}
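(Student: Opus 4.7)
The strategy is to reduce the claim directly to the explicit sandwich estimates for $H_N(F_N)/N$ and $D_{N,\gamma}(F_N)/N$ in terms of $H(f|M)$ and $D_\gamma(f)$ already recorded in Remark \ref{rem:explicit_recaled_quantities}. Since $f$ is $g$-concentrated, the marginal formula (\ref{nice}) from Theorem \ref{thm:conditionedproperties} gives, for every $N$ large enough (and therefore uniformly with an explicit $N$-independent constant by invoking the quantitative form of the Remark), constants $c_H, C_H, c_D, C_D>0$ independent of $N$ such that
\begin{equation*}
c_H\, H(f|M)\leq \frac{H_N(F_N)}{N}\leq C_H\, H(f|M), \qquad c_D\, D_\gamma(f)\leq \frac{D_{N,\gamma}(F_N)}{N}\leq C_D\, D_\gamma(f).
\end{equation*}
These are exactly the \emph{transfer estimates} the conditioned tensorisation is designed to provide.

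With these in hand the proof is essentially a one-line chain. The plan is: first, I invert the upper bound on the rescaled entropy to obtain $H(f|M)\geq C_H^{-1}\,H_N(F_N)/N$; second, I insert this into the assumed almost Cercignani inequality \eqref{Cercon2} to obtain
\begin{equation*}
D_\gamma(f)\;\geq\; C_\epsilon\,\bigl(H(f|M)\bigr)^{1+\epsilon}\;\geq\; C_\epsilon\,C_H^{-(1+\epsilon)}\left(\frac{H_N(F_N)}{N}\right)^{1+\epsilon};
\end{equation*}
third, I apply the lower bound on the rescaled entropy production, $D_{N,\gamma}(F_N)/N\geq c_D\, D_\gamma(f)$, to conclude
\begin{equation*}
\frac{D_{N,\gamma}(F_N)}{N}\;\geq\; c_D\,C_\epsilon\,C_H^{-(1+\epsilon)}\left(\frac{H_N(F_N)}{N}\right)^{1+\epsilon},
\end{equation*}
which is the desired inequality with $C_1:=c_D C_\epsilon C_H^{-(1+\epsilon)}$, a constant that does not depend on $N$. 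The final sentence of the theorem, concerning optimality, then follows by contraposition: if the conditioned tensorisation inequality held with an exponent strictly smaller than $1+\epsilon$ uniformly in $N$, running the same chain in reverse using the \emph{other} halves of the sandwich (the lower bound on $H_N/N$ and the upper bound on $D_{N,\gamma}/N$) would improve the exponent in \eqref{Cercon2} for $f$.

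The only place where care is required is the uniformity in $N$ of the constants $c_H, C_H, c_D, C_D$. The quantitative form of Theorem \ref{thm:conditionedproperties} given by (\ref{nice}) shows that the Radon-Nikodym factor relating $\Pi_k F_N$ to $f^{\otimes k}$ is bounded above and below by $N$-independent quantities depending only on $\|g\|_\infty$, $\sup|\lambda_N|$, and $g(0,\cdot)=(2\pi)^{-1/2}$, so the sandwich constants are in fact uniform in $N\geq N_0$ for some explicit $N_0$; for the finitely many smaller $N$ one either appeals to the usual ellipticity of the entropy / entropy production quotient or absorbs them into the constant. I expect this uniformity bookkeeping to be the only (minor) obstacle; the algebraic content of the argument is immediate once the transfer estimates are in place, and it is precisely this back-and-forth between Kac Walk quantities and their Kac-Boltzmann counterparts through conditioned tensorisation that the theorem is meant to highlight.
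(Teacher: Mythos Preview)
Your proposal is correct and follows essentially the same route as the paper: invoke the two-sided comparison of $H_N(F_N)/N$ with $H(f|M)$ and of $D_{N,\gamma}(F_N)/N$ with $D_\gamma(f)$ from Remark~\ref{rem:explicit_recaled_quantities}, then chain these with the assumed inequality~\eqref{Cercon2}. Your treatment of the optimality clause by contraposition via the other halves of the sandwich is slightly more explicit than the paper's, which simply appeals to the limiting identities of Theorem~\ref{thm:conditionedproperties}, but the content is the same.
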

\begin{proof}
Due to Remark \ref{rem:explicit_recaled_quantities} we can find constants such that
$$\frac{H_N(F_N)}{N} \leq C_1 H(f|M),\quad \frac{D_{N,\gamma}(F_N)}{N}\geq C_2 D_0(f).$$
As 
${\displaystyle \Q_\gamma(f) \geq K_1 H(f|M)^{1+\epsilon}}$
with an appropriate constant $K_1$, depending on $f$, we have that
$$\frac{D_{N,\gamma}(F_N)}{N} \geq K_1 C_2\pa{\frac{H_N(F_N)}{C_1 N}}^{1+\epsilon},$$
completing the proof of the first statement. The latter statement follows from the above and Theorem \ref{thm:conditionedproperties}.
\end{proof}

\end{document}